\newcommand{\Oh}{\mathcal{O}}
\newcommand{\Ohtilda}{\tilde{\Oh}}
\newcommand{\eps}{\varepsilon}
\newcommand{\QD}{\texttt{QD}(T_1,T_2)}
\newcommand{\mult}{\textsc{mult}}
\newcommand{\pluseq}{\mathrel{+}=}
\newcommand{\TT}{\mathcal{T}}
\newcommand{\pow}{\gamma}
\newcommand{\spec}{\textsc{spec}}
\newtheorem{theorem}{Theorem}[section]
\newtheorem{property}[theorem]{Property}
\newtheorem{corollary}[theorem]{Corollary}
\newtheorem{lemma}[theorem]{Lemma}
\newtheorem{observation}[theorem]{Observation}
\newtheorem{fact}[theorem]{Fact}
\newtheorem{conjecture}{Conjecture}
\newtheorem{proposition}[theorem]{Proposition}
\theoremstyle{definition}   
\newtheorem{definition}[theorem]{Definition}
\newtheorem{question}{Question}
\title{Computing Quartet Distance Is Equivalent to Counting 4-Cycles}
\date{}	
\author[1]{Bartłomiej Dudek
}
\author[1]{Paweł Gawrychowski}
\affil[1]{Institute of Computer Science, University of Wrocław, Poland}
\newcommand{\FIGURE}[4]{
\begin{figure}[#1]
\begin{centering}
\includegraphics[width={#2}\textwidth]{{figures/#3}.pdf}
\caption{#4}
\label{fig:#3}
\end{centering}
\end{figure}
}
\begin{document}
\maketitle
 
\thispagestyle{empty}
 
\begin{abstract}
The quartet distance is a measure of similarity used to compare two unrooted
phylogenetic trees on the same set of $n$ leaves, defined as the number of
subsets of four leaves related by a different topology in both trees. After a series
of previous results, Brodal et al. [SODA 2013] presented an algorithm
that computes this number in $\Oh(nd\log n)$ time, where $d$ is the maximum degree of
a node. For the related triplet distance between rooted phylogenetic trees, the same
authors were able to design an $\Oh(n\log n)$ time algorithm, that is, with running time
independent of $d$. This raises the question of achieving such complexity for
computing the quartet distance, or at least improving the dependency on $d$.

Our main contribution is a two-way reduction establishing that the complexity
of computing the quartet distance between two trees on $n$ leaves is the same,
up to polylogarithmic factors, as the complexity of counting 4-cycles in an undirected
simple graph with $m$ edges. The latter problem has been extensively studied, and the fastest
known algorithm by Vassilevska Williams et al. [SODA 2015] works in $\Oh(m^{1.48})$ time.
In fact, even for the seemingly simpler problem of detecting a 4-cycle, the best
known algorithm works in $\Oh(m^{4/3})$ time, and a conjecture of Yuster and Zwick
implies that this might be optimal. In particular, an almost-linear time for
computing the quartet distance would imply a surprisingly efficient algorithm for counting
4-cycles. In the other direction, by plugging in the state-of-the-art algorithms for
counting 4-cycles, our reduction allows us to significantly decrease the complexity
of computing the quartet distance. For trees with unbounded degrees we obtain
an $\Oh(n^{1.48})$ time algorithm, which is a substantial improvement on the previous
bound of $\Oh(n^{2}\log n)$. For trees with degrees bounded by $d$, by analysing
the reduction more carefully, we are able to obtain an $\Ohtilda(nd^{0.77})$\footnote{$\Ohtilda(.)$ hides factors
polylogarithmic in $n$.} time algorithm,
which is again a nontrivial improvement on the previous bound of $\Oh(nd\log n)$.
\end{abstract}
 
\clearpage
\setcounter{page}{1}
 
\section{Introduction}

Many branches of science study evolutionary relationships between objects.
The canonical example is biology with species or gene relationships, but 
similar questions arise also in linguistics looking into related natural
languages~\cite{GrayDG09,WalkerWMA12,NakhlehWRE05}, or archaeology
studying how ancient manuscripts changed over time~\cite{Buneman71}.
In most cases the hierarchical structure is represented as a tree, called a
phylogenetic tree in biological applications. In this paper we focus on
{\it unrooted} phylogenetic trees that describe the relationship between
species mapped to its leaves without making any assumptions about
the ancestry. The main goal is to understand the true relationship between the
objects in question based on often incomplete or noisy data. An additional difficulty
is that the obtained tree depends on the inference method (e.g.
Q*~\cite{BerryG00}, neighbor
joining~\cite{SaitouN87}) and the assumed model.
See~\cite[Chapter 17]{Gusfield97} for an overview of available models and
construction methods.
Consequently, we might be able to infer multiple trees that should be
compared to determine if our results are consistent.

The most common approach for comparing multiple trees is to define a measure of dissimilarity
between two trees. Various metrics have been already defined, e.g. the symmetric difference metric~\cite{RobinsonF79},
the nearest-neighbor interchange metric~\cite{WatermanS78}, the subtree transfer distance \cite{AllenS01},
the Robinson and Foulds distance~\cite{RobinsonF81}, the quartet distance~\cite{EstabrookMM85} and the triplet distance~\cite{Dobson75}.
Each of them has its particular advantages and disadvantages, see the
discussion in~\cite{BandeltD86,SteelP93}, but the quartet-based reconstruction is perhaps the most
studied (see, e.g., \cite{BerryG00,BerryJKLW99, JiangKL98,JohnWMV03, SnirR10, SnirY12,StrimmerH96}).
Most importantly, according to Bryant et al.~\cite{BryantTKL00}, as opposed to some
other methods, it is able to distinguish both between transformations that affect a large number
of leaves and those that affect only a few of them.
The idea is to consider the basic unit of information in such a tree, which is a subtree induced by four
leaves (called a quartet).
See Figure~\ref{fig:four_topologies} for an illustration of the four possible topologies induced by a quartet.
\begin{definition}
Given two trees, each on the same set of leaves, the \textit{quartet distance} is the number of quartets that are related by different topologies in both trees. 
\end{definition}
\noindent
Note that in this context we may assume that there are no internal nodes
of degree 2 in both trees. We assume that the set of leaves corresponds to the full set of species. 

\FIGURE{b}{0.9}{four_topologies}{Four possible topologies of a tree induced by four leaves.
Note that in the induced subtree there might be some internal nodes on each of the edges.}

Quartet distance has been studied from multiple angles. From the combinatorial perspective,
an intriguing question is to investigate the maximum possible quartet distance between two trees on
$n$ leaves. A conjecture of Bandelt and Dress~\cite{BandeltD86} is that this is always $(\frac{2}{3}+o(1)){n\choose 4}$,
with the best known bound being $(0.69+o(1)){n \choose 4}$ by Alon et al.~\cite{AlonNS16}.
From the algorithmic perspective, a long-standing challenge is to compute the quartet distance efficiently.
For trees with all internal nodes of degree 3, a series of papers~\cite{SteelP93,BryantTKL00,BrodalFP01}
has culminated in an $\Oh(n\log n)$ time algorithm by Brodal et al.~\cite{BrodalFP03}.
For the more challenging general case the complexity has been decreased from $\Oh(n^{3})$~\cite{ChristiansenMPR05}
to $\Oh(n^{2.688})$~\cite{NielsenKMP11}
and then, for trees with all internal degrees bounded by $d$,
further to $\Oh(n^{2}d^{2})$~\cite{ChristiansenMPR05}, $\Oh(n^{2}d)$~\cite{ChristiansenMPRS06}, $\Oh(nd^{9}\log n)$~\cite{StissingPMBF07},
and finally to $\Oh(nd\log n)$ by Brodal et al.~\cite{BrodalFMPS13}.
Even though some reconstruction methods produce trees with all internal degrees bounded by 3,
called {\it fully-resolved}, trees that are not fully resolved do appear in some contexts, see e.g.~\cite{Buneman71}
and its refinements. This suggests the following question.

\begin{question}
\label{q:degree}
Can we beat $\Oh(nd\log n)$ for computing the quartet distance between two
trees on $n$ leaves and all internal degrees bounded by $d$?
\end{question}

A related measure is the triplet distance, defined for {\it rooted} phylogenetic trees, where we count
triplets of leaves that are related by the same topology in both trees~\cite{Dobson75}. A successful line of
research~\cite{CritchlowPQ96,BansalDFB11,SandBFPM13} has resulted in an $\Oh(n\log n)$ time algorithm
by Brodal et al.~\cite{BrodalFMPS13} for computing the triplet distance between two arbitrary trees.
The algorithms designed for computing the triplet and quartet distance are based on similar ideas,
see the survey by Sand et al.~\cite{SandHJFBPM13}. Thus it is plausible that, with some additional insight,
we might be able to design an $\Oh(n\log n)$ time algorithm for computing the quartet distance between
two arbitrary trees, without any assumption on their degrees, similarly as for the triplet distance.
Note that the fastest currently known algorithm for the general case works in $\Oh(n^{2}\log n)$ time~\cite{BrodalFMPS13}.
This suggests the following question.

\begin{question}
\label{q:general}
Can we design an $\Oh(n\log n)$ time algorithm for computing the quartet
distance between two trees on $n$ leaves?
\end{question}

\paragraph{Fine grained complexity.}
The traditional notion of ``easy'' and ``hard'' problems is defined with respect to the polynomial time
solvability. However, for many such easy problems the best known algorithms have very high complexities,
making them intractable in practice, in spite of a significant effort from the algorithmic community. This suggests
that the known algorithms are optimal (or at least very close to optimal). Unfortunately, proving unconditional
statements on the optimal complexity doesn't seem within our reach, unless we are willing to work in a severely
restricted model of computation. This spurred a recent systematic effort to create a map of polynomial-time solvable
problems by connecting them to a few believable conjectures on complexities of some basic problems,
such as SETH, APSP, or 3SUM. See a recent survey by Vassilevska Williams~\cite{VirgiICM} for a summary of this effort.

A basic question concerning graphs is to count (or detect) occurrences of certain structures, with perhaps
the most fundamental example being counting triangles, that is, 3-cycles, in a simple undirected graph on $n$ nodes.
Of course, this can be easily solved in $\Oh(n^{\omega})=\Oh(n^{2.38})$ by plugging in the fastest known matrix
multiplication algorithm~\cite{Gall14a,Williams12}. Somewhat surprisingly, Vassilevska Williams and Williams~\cite{Triangle}
proved that these two problems are, in a certain sense, equivalent: a truly subcubic algorithm for detecting triangles
implies a truly subcubic algorithm for Boolean matrix multiplication. For the more practically relevant case of a sparse
undirected graph with $m$ edges, Alon et al.~\cite{AlonYZ97} designed an $\Oh(m^{2\omega/(\omega+1)})=\Oh(m^{1.41})$
time algorithm for counting triangles (their algorithm is stated for finding a triangle, but can be easily extended). Going one
step further, 4-cycles can also be counted in $\Oh(n^{\omega})$ time~\cite{AlonYZ97}, but interestingly one can \textit{find} a $2k$-cycle,
for $k\geq 2$, in $\Oh(n^{2})$ time, as shown by Yuster and Zwick \cite{YusterZ97}.
If the graph is given as an adjacency matrix, this is clearly optimal, but it seems plausible to conjecture that
this is also optimal if the graph is given as adjacency lists.

\begin{conjecture}[Yuster and Zwick \cite{YusterZ97}]\label{conj:no_n2-eps}
 For every $\eps>0$, there is no algorithm that detects 4-cycles in a graph on $n$ nodes in~$\Oh(n^{2-\eps})$ time.
\end{conjecture}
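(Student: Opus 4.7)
This is the Yuster--Zwick conjecture, a well-known open problem in fine-grained complexity, so an unconditional proof is well beyond current techniques; I will sketch the routes I would attempt. The natural plan is a fine-grained reduction from a source problem already conjectured to require $n^{2-o(1)}$ time on inputs of size $\Theta(n)$. Plausible sources include the Online Matrix--Vector (OMv) conjecture, a sparse variant of the Boolean matrix multiplication conjecture, or $3$SUM on structured inputs. Given such a source instance on $N$ elements, the goal is to build a graph on $\Oh(N)$ vertices whose $4$-cycles are in bijection with witnesses of the source, so that an $\Oh(n^{2-\eps})$ $4$-cycle detector would yield an $\Oh(N^{2-\eps})$ algorithm for the source, contradicting the assumption.

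A concrete first attempt would reduce from OMv. Given a Boolean matrix $M$ and a query vector $v$, I would use a $4$-partite graph $V_{1}\cup V_{2}\cup V_{3}\cup V_{4}$ whose bipartite strata encode distinct pieces of the computation: edges between $V_{1},V_{2}$ and between $V_{3},V_{4}$ encode two halves of $M$, those between $V_{2},V_{3}$ encode $v$, and those between $V_{4},V_{1}$ encode the row coordinates whose $(Mv)_{i}$ values must be decided. A $4$-cycle threading all four layers would then certify a witness for some $(Mv)_{i}=1$. To keep the target graph sparse, one would combine this with the Bondy--Simonovits bound already invoked in the excerpt to pre-sparsify the source to $\Oh(N^{3/2})$ edges, so that the reduction need only be hard on sparse inputs.

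The main obstacle, and the reason the conjecture has stood open since $1997$, is that $4$-cycles admit a matrix-algebraic shortcut: in the dense regime they are counted via sums over entries of $A^{2}$ in $\Oh(n^{\omega})$ time, so any gadget that faithfully encodes a conjecturally hard matrix product into $4$-cycles inherits that shortcut and cannot produce an $n^{2-o(1)}$ lower bound. The hardness must therefore come from the sparse adjacency-list regime, where Bondy--Simonovits forces the relevant graphs to have at most $\Oh(n^{3/2})$ edges and no known algebraic trick applies. Designing a gadget that lives in this sparse regime, faithfully encodes an $n^{2-o(1)}$-hard task, and avoids introducing parasitic $4$-cycles through its own wiring, is precisely the difficulty that has so far resisted the standard fine-grained toolkit, and I would expect to spend the bulk of any serious attack here rather than on the reduction's bookkeeping.
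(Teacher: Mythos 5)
This statement is a conjecture, not a theorem: the paper offers no proof of it and merely cites Yuster and Zwick, using it (and its sparse-graph relative, Conjecture~2) as a hardness hypothesis on which the lower-bound side of the reduction is conditioned. You correctly recognize that no unconditional proof is within reach and rightly decline to claim one, so there is no gap in the sense of a failed argument --- but there is also nothing here that can be checked against the paper, which contains no argument for this statement at all. Your speculative reduction sketches should be read with care: a fine-grained reduction from OMv or Boolean matrix multiplication that places all witnesses in bijection with $4$-cycles of an $\Oh(N)$-vertex graph would, as you yourself observe, collide with the known $\Oh(n^{\omega})$ counting and $\Oh(m^{4/3})$ detection algorithms unless the source problem is itself solvable in that time, so the four-layer OMv gadget as described cannot work without substantial modification. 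For the purposes of this paper nothing more is needed than to state the conjecture and treat it as an assumption, which is exactly what the authors do.
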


Returning to sparse graphs, Alon et al.~\cite{AlonYZ97} showed how to find a 4-cycle in $\Oh(m^{4/3})$ time, and recently Dahlgaard et al.~\cite{DahlgaardKS17} provided a very nontrivial extension to finding any $2k$-cycle in $\Oh(m^{2k/(k+1)})$ time.
Moreover, they showed that this is optimal, if one is willing to believe Conjecture~\ref{conj:no_n2-eps}, and using a general combinatorial result of Bondy and Simonovits that a graph with $m=100kn^{1+1/k}$ edges must contain a $2k$-cycle \cite{BondyS74}.
See also Abboud and Vassilevska Williams~\cite{AbboudW14} for a similar conjecture on the complexity of detecting a 3-cycle.

\begin{conjecture}[Dahlgaard et al.~\cite{DahlgaardKS17}]\label{conj:no_m4/3-eps}
 For every $\eps>0$, there is no algorithm that detects a 4-cycle in a graph with $m$ edges in~$\Oh(m^{4/3-\eps})$ time.
\end{conjecture}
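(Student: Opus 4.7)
My plan is to prove the statement by contrapositive: assuming an algorithm $A$ that detects a 4-cycle in time $\Oh(m^{4/3-\eps})$ for some $\eps > 0$, I would design an algorithm that detects a 4-cycle in an $n$-vertex graph in time $\Oh(n^{2-\eps'})$ for some $\eps' > 0$, directly contradicting Conjecture~\ref{conj:no_n2-eps}. The only nontrivial ingredient is Reiman's extremal bound: every $C_4$-free graph on $n$ vertices has at most $\Oh(n^{3/2})$ edges.

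Given an input graph $G$ on $n$ vertices (in adjacency-list form), I would scan the edges, halting as soon as the count surpasses $C\cdot n^{3/2}$ for a sufficiently large absolute constant $C$. If this happens, the extremal bound immediately guarantees a 4-cycle in $G$, so I output ``yes'' in time $\Oh(n^{3/2})$. Otherwise $G$ has $m \le C\cdot n^{3/2}$ edges, and running $A$ on $G$ takes time $\Oh(m^{4/3-\eps}) = \Oh(n^{(3/2)(4/3-\eps)}) = \Oh(n^{2 - 3\eps/2})$. Taking $\eps' = 3\eps/2$ yields the required contradiction.

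This is essentially a one-step proof, with no real technical obstacle. The key insight is simply that the extremal exponent $3/2$ for $C_4$-free graphs, multiplied by the conjectured detection exponent $4/3$, equals exactly $2$, so the two conjectures pin each other down tightly via this single threshold. A genuine difficulty would appear only when trying to extend the equivalence to $2k$-cycles for $k > 2$, where a naive edge count no longer saturates the extremal bound and one has to reason more delicately about vertex degrees, as Dahlgaard et al.\ do.
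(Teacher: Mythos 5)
The statement you were asked about is a \emph{conjecture}, not a theorem: the paper offers no proof of it, and none is currently possible, since it asserts an unconditional lower bound for 4-cycle detection. What you have actually proved is the implication ``Conjecture~\ref{conj:no_n2-eps} $\Rightarrow$ Conjecture~\ref{conj:no_m4/3-eps}'', which is strictly weaker than the statement itself (Conjecture~\ref{conj:no_n2-eps} is equally unproven). That said, your argument for the implication is correct and is exactly the reasoning the paper attributes to Dahlgaard et al.: if a graph on $n$ vertices has more than $Cn^{3/2}$ edges it must contain a $C_4$ by the Reiman/K\H{o}v\'ari--S\'os--Tur\'an bound (the paper cites the Bondy--Simonovits form, $m = 100kn^{1+1/k}$ with $k=2$), so one may answer ``yes'' after an $\Oh(n^{3/2})$-time edge count; otherwise $m = \Oh(n^{3/2})$ and a hypothetical $\Oh(m^{4/3-\eps})$ detector runs in $\Oh(n^{2-3\eps/2})$ time, refuting Conjecture~\ref{conj:no_n2-eps}. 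Your arithmetic and the observation that the exponents $3/2$ and $4/3$ multiply to exactly $2$ are both right.

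So the only genuine issue is one of logical status: you should present this as a conditional hardness transfer (the sparse conjecture is at least as believable as the dense one), not as a proof of the conjecture. Within the paper, both conjectures function purely as hypotheses under which the quartet-distance lower bound of the Proposition in the introduction holds.
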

A related question is to find an occurrence of an induced subgraph. Vassilevska Williams et al.~\cite{WilliamsWWY15}
provide a systematic study of this question for all induced four-node graphs. They also provide an
algorithm that can be used to {\it count} occurrences of a 4-cycle (not necessarily induced) in $\Oh(m^{1.48})$ time.
Also, Abboud et al.~\cite{AbboudWY15} consider a certain generalisation of detecting 3-cycles in which the nodes are colored
and we are asked to check if there exists a 3-cycle for every possible triple of distinct colors.

\paragraph{Our contribution.}
We answer both Question~\ref{q:degree} and Question~\ref{q:general} by connecting the complexity of computing the
quartet distance with the complexity of counting 4-cycles in a simple undirected graph. By providing reductions in both directions we show that
these problems are equivalent, up to polylogarithmic factors. The reductions are summarised in Figure~\ref{fig:plan}.

\begin{figure}[ht!]

\newcommand{\myproblem}[1]{\large\textbf{#1}}
\newcommand{\mydesc}[1]{\small{Section {#1}}}
\newcommand{\myedgeshort}{edge[bend left=10]}
\newcommand{\myedgelong}{edge[bend left=15]}

    \centering
    \begin{tikzpicture}
    \begin{scope}[every node/.style={},
                  every edge/.style={draw=black,very thick}]
                  
	\node (multi) at (0,0) {\myproblem{Multigraphs}};
        \node (multismall) at (3.3,2) {\myproblem{Multigraphs with small multiplicities}};
        \node (simple) at (6,0.75) {\myproblem{Simple graphs}};
        \node (bipartite) at (6,-0.75) {\myproblem{Bipartite simple graphs}};
        \node (quartets) at (3,-2) {\myproblem{Quartet distance}};
        
        \path [->] (multi)      \myedgelong node[left]  	{\mydesc{\ref{se:mult_reduction}}} (multismall);
        \path [->] (multismall) \myedgeshort node[right]	{\mydesc{\ref{se:small_mult}}} (simple);
        \path [->] (simple)  	\myedgeshort node[right]	{\mydesc{\ref{se:simple_to_bipartite}}} (bipartite);
        \path [->] (bipartite)  \myedgeshort node[right]	{\mydesc{\ref{se:counting_4c_to_quartet_d}}} (quartets);
        \path [->] (quartets)   \myedgelong node[left]  	{\mydesc{\ref{se:from_qd_to_4c} and \ref{se:faster_alg_for_qd}}} (multi);

    \end{scope}
    \end{tikzpicture}
    \caption{Summary of the reductions between counting 4-cycles in specific graphs (bipartite or arbitrary, simple or multi)
and computing the quartet distance, all up to polylogarithmic factors.}
    \label{fig:plan}
\end{figure}

Our reduction from counting 4-cycles in a simple graph to computing the quartet distance implies that an $\Oh(n^{4/3-\eps})$ time algorithm
for computing the quartet distance between two trees on $n$ leaves would imply a surprisingly fast $\Oh(m^{4/3-\eps})$ time
algorithm for counting, and thus also detecting, 4-cycles, thereby refuting Conjecture~\ref{conj:no_m4/3-eps}. 
Note that we create a node of the tree for every edge of the original graph and hence the complexity of the algorithm for detecting 4-cycles implied by our reduction depends on the number of edges, not nodes.
This provides a reasonable explanation of why there has been no $\Oh(n\log n)$ time algorithm for computing the quartet distance.

\begin{proposition}
 There exists no algorithm that can compute the quartet distance between trees on $n$ leaves in $\Oh(n^{4/3-\eps})$ time unless
Conjecture~\ref{conj:no_m4/3-eps} is false.
\end{proposition}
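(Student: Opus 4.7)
The plan is to compose the reductions from counting 4-cycles to computing the quartet distance that are indicated by the arrows on the right-hand side of Figure~\ref{fig:plan}. Suppose for contradiction that there is an algorithm $\mathcal{A}$ that computes the quartet distance between two trees on $n$ leaves in time $\Oh(n^{4/3-\eps})$ for some $\eps>0$. Starting from an arbitrary simple undirected graph $G$ with $m$ edges, I would apply the reduction of Section~\ref{se:simple_to_bipartite} to obtain an equivalent instance on a bipartite simple graph, and then the reduction of Section~\ref{se:counting_4c_to_quartet_d} to obtain two trees whose quartet distance encodes the number of 4-cycles in the original graph.

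The key quantitative input we need from these two reductions is that, up to polylogarithmic factors, they preserve size: a graph with $m$ edges is mapped to trees with $n = \Ohtilda(m)$ leaves. Granting this, running $\mathcal{A}$ on the resulting trees yields a count of the 4-cycles in $G$ in time
\[
\Oh(n^{4/3-\eps}) \;=\; \Ohtilda(m^{4/3-\eps}) \;=\; \Oh(m^{4/3-\eps'})
\]
for any $0<\eps'<\eps$, since polylogarithmic factors can be absorbed into an arbitrarily small shrinkage of the exponent. Because a counting algorithm immediately yields a detection algorithm (return ``yes'' iff the count is positive), this contradicts Conjecture~\ref{conj:no_m4/3-eps}.

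Thus the proof is essentially a bookkeeping exercise once the two component reductions are in place; the statement requires no new idea beyond them. The only point that needs to be verified carefully, and which I expect to be the main obstacle, is the size bound: both reductions must be shown to output trees whose number of leaves is $\Ohtilda(m)$ rather than, say, $\Ohtilda(m^{1+\delta})$ for some $\delta>0$, since otherwise the exponent $4/3-\eps$ would be inflated on the way back to $m$ and the refutation of Conjecture~\ref{conj:no_m4/3-eps} would fail. Since the reductions in Sections~\ref{se:simple_to_bipartite} and~\ref{se:counting_4c_to_quartet_d} are constructed precisely with this near-linear blow-up in mind, the proposition follows.
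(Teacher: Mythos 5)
Your proposal is correct and follows essentially the same route as the paper: compose the reduction of Section~\ref{se:simple_to_bipartite} with that of Section~\ref{se:counting_4c_to_quartet_d}, use the fact that a graph with $m$ edges yields trees on $\Oh(m)$ leaves, and observe that counting subsumes detection. The only (harmless) imprecision is that the paper's reduction is exactly linear in size and time, so no polylogarithmic slack needs to be absorbed into the exponent.
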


In the other direction, the reduction from computing the quartet distance to multiple instances of counting 4-cycles in a simple graph
allows us to significantly improve on the best known complexity of the former problem by plugging in the state-of-the-art
algorithms for the latter problem. Recall that 4-cycles in a simple graph on $n$ nodes with $m$ edges can be counted in
either $\Oh(n^{2.38})$~\cite{AlonYZ97} or $\Oh(m^{1.48})$ time~\cite{WilliamsWWY15} (the $\Oh(m^{4/3})$ algorithm based on capped $k$-walks \cite{DahlgaardKS17} cannot be applied here, as it merely detects, but does not count the cycles).
Using the latter algorithm we obtain
that the quartet distance between two trees on $n$ leaves can be computed in $\Oh(n^{1.48})$ time, which is a substantial
improvement on the previously known quadratic time bound. Furthermore, for trees with all internal nodes having degrees
bounded by $d$ the running time of the obtained algorithm is $\Ohtilda(nd^{0.96})$,
and if we carefully analyse parameters of the graphs generated by the reduction and switch to the
$\Oh(n^{2.38})$ algorithm for counting 4-cycles in some of them, the complexity further decreases to $\Ohtilda(nd^{0.77})$.

\begin{theorem}\label{thm:new_alg_for_qd}
There exists an algorithm for computing the quartet distance between two trees on $n$ leaves and all internal nodes having degrees
bounded by $d$ in $\Ohtilda(\min\{n^{1.48},nd^{0.77}\})$ time.
\end{theorem}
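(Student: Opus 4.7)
The plan is to chain the reductions summarised in Figure~\ref{fig:plan} with the state-of-the-art algorithms for counting 4-cycles in simple graphs. Concretely, computing the quartet distance between the two input trees is first reduced to counting 4-cycles in a collection of multigraphs (the arrow from \emph{Quartet distance} back to \emph{Multigraphs}), and those multigraphs are then transformed, via bounded-multiplicity multigraphs, into simple (bipartite) graphs on which we can directly invoke either the $\Oh(m^{1.48})$ algorithm of Vassilevska Williams et al.~\cite{WilliamsWWY15} or the $\Oh(n^{\omega})=\Oh(n^{2.38})$ algorithm of Alon et al.~\cite{AlonYZ97}.

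For the $\Ohtilda(n^{1.48})$ bound I would verify that the collection of simple graphs produced by the reduction has edge counts summing to $\Ohtilda(n)$. Because the map $m\mapsto m^{1.48}$ is convex and superlinear, applying the $m^{1.48}$ algorithm instance by instance and summing gives a total running time dominated by the worst case in which all edges are concentrated in a single graph, yielding $\Ohtilda(n^{1.48})$.

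For the $\Ohtilda(nd^{0.77})$ bound I would track, for every generated simple graph $G_i$, both its vertex count $n_i$ and edge count $m_i$. Under the degree bound $d$ the reductions should produce aggregate bounds of the form $\sum n_i = \Ohtilda(n)$ and $\sum m_i = \Ohtilda(nd)$. Rather than uniformly running the $m^{1.48}$ algorithm, which would give only the weaker $\Ohtilda(nd^{0.96})$, I would pick, per instance, whichever of $\Oh(m_i^{1.48})$ and $\Oh(n_i^{\omega})$ is smaller; the crossover lies around $m_i = n_i^{\omega/1.48}$. Substituting the threshold into the two cost functions and then summing under the aggregate constraints (via a Hölder-type inequality) produces the exponent $0.77$ in $d$ while keeping the $n$-dependence linear up to polylogarithmic factors.

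The main obstacle is the per-instance bookkeeping: the graphs produced by the reduction are defined in terms of pairs of internal nodes of the two input trees, so one has to charge both the vertex count and the edge count of each $G_i$ to local quantities in the trees and then confirm that the charges telescope to the claimed aggregate bounds $\Ohtilda(n)$ and $\Ohtilda(nd)$. Once this is in place, and once the dense/sparse switch is applied per instance rather than globally, the convexity argument delivers exactly the advertised exponent $0.77$ rather than the naive $0.96$.
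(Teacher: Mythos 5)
Your high-level plan matches the paper's: reduce to counting 4-cycles in many multigraphs, push those through the bounded-multiplicity and simple-graph reductions, and switch per instance between the $\Oh(m^{1.48})$ and $\Oh(n^{\omega})$ algorithms. But there are two genuine gaps. First, the aggregate bound you posit, $\sum_i m_i=\Ohtilda(nd)$, is both unestablished and too weak to yield the exponent $0.77$: with $\sum_i n_i=\Ohtilda(n)$, $n_i=\Oh(d)$ and $\sum_i m_i=\Ohtilda(nd)$ one could have $\Ohtilda(n/d)$ instances that are complete bipartite graphs on $\Theta(d)$ nodes (so $m_i=\Theta(d^2)$), and then $\sum_i\min\{m_i^{1.48},n_i^{\omega}\}=\Ohtilda((n/d)\cdot d^{\omega})=\Ohtilda(nd^{1.37})$, which is worse even than the known $\Oh(nd\log n)$. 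The bound the paper actually proves and needs is $\sum_i m_i=\Ohtilda(n)$ (each multigraph edge is charged to a common leaf of a pair of top-tree clusters, and every leaf lies in only $\Oh(\log n)$ clusters of each tree), combined with $m_i\le\min\{n_i^2,n\}$ and $n_i=\Oh(d)$; only with these constraints does the dyadic grouping by $(n_i,m_i)$ and the crossover at $m_i\approx n_i^{\omega/\pow}$ produce the exponent $\omega-\omega/\pow<0.77$.

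Second, you name but do not resolve the central obstacle: the natural reduction creates one multigraph per pair of central nodes, i.e.\ $\Theta(n^2)$ instances, so no amount of per-instance bookkeeping can telescope to $\Ohtilda(n)$. The paper's key idea is a top tree (cluster) decomposition applied to both trees simultaneously, so that only $\Oh(n\log^2 n)$ relevant pairs of clusters are processed explicitly; shared stars are classified into type I (counted via the multigraphs, with implicit/merged nodes and a separate treatment of ``missing stars'' using heavy-light decomposition and orthogonal range counting) and type II (counted directly with range queries in $\Ohtilda(n)$ total time). Note also that only unresolved quartets need this machinery; shared butterflies are handled by the existing $\Oh(n\log n)$ algorithm of Brodal et al. Without this decomposition and the near-linear aggregate edge bound it provides, neither the $\Ohtilda(n^{1.48})$ bound nor the $\Ohtilda(nd^{0.77})$ bound follows from your outline.
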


An important ingredient of our proof is a reduction from counting 4-cycles in a multigraph with edge multiplicities bounded
by $U$ to $\Oh(\log^{4}U)$ instances of counting 4-cycles in simple graphs of roughly the same size. At first this might seem
to be an unnecessary complication, as it is plausible that the $\Oh(m^{1.48})$ algorithm of Vassilevska Williams et al.
\cite{WilliamsWWY15} can be extended, with some effort, to work for multigraphs, and used to show the existence of an $\Oh(n^{1.48})$ time algorithm for computing the quartet distance.
However, it is not completely clear
if every algorithm for this problem can be similarly extended, so further improvements in the complexity
of counting 4-cycles might or might not translate into an improvement for computing the quartet distance.
Furthermore, we don't see how to provide a direct reduction from counting 4-cycles in a multigraph to computing
the quartet distance, so switching to multigraphs wouldn't allow us to state an equivalence between these two problems.
Finally, we believe that our general reduction from multigraphs to simple graphs might be of independent interest.

We note that Jansson and Lingas \cite{JanssonL14} show how to reduce computing triplet distance in so called galled tree to counting triangles in many graphs.
However, this results in $\Oh(n^{2.687})$ algorithm, but in fact an $\Oh(n\log n)$ solution which does not require this idea exists \cite{JanssonRS17}.
We use a significantly different approach that gives us more control on sizes of the obtained subproblems and bound the overall running time.
Furthermore we need to deal with additional technical complications due to the fact that we are working with 4-cycles instead of triangles.

\paragraph{Overview of the methods. }
The $\Oh(nd\log n)$ complexity of the fastest known algorithm for computing the quartet distance suggests
that a difficult instance consists of two trees with high internal degrees, and indeed the trees obtained in our reduction
have small depth but very high degrees.
We start with reducing counting 4-cycles in a simple graph to counting 4-cycles in a simple bipartite graph. This
is easily achieved by duplicating the nodes. Then, we construct two trees of depth 2, each consisting of the root with
its children corresponding to the nodes of the graph. Finally, each edge of the graph corresponds to a leaf attached,
in every tree, to the child of the root corresponding to its appropriate endpoint. The main difficulty in this reduction is
that we need to carefully analyse all possible quartets and bipartite graphs on four edges to argue that, with some additional linear-time computation,
we can extract the number of 4-cycles from the quartet distance.

In the other direction, our reduction is more involved. We first notice that due to the algorithm of Brodal et
al.~\cite{BrodalFMPS13} we only need to show how to efficiently count quartets that are unresolved in both trees, that is,
stars (the rightmost topology in Figure~\ref{fig:four_topologies}). As a first approximation, we could iterate over the potential central nodes of the star in both trees and
create a bipartite multigraph such that counting matchings of size 4 there gives us the number of quartets
with these central nodes. There are at least two issues with this approach. First, we need to prove that counting such
matchings in a multigraph can be reduced to counting 4-cycles in a simple graph. Second, we cannot afford to create
a separate instance for every pair of central nodes, and furthermore even if we were able to decrease their number we
would still need to have some control on the total size of the obtained bipartite graphs.

We overcome the first difficulty in two steps. We begin with reducing counting matchings of size 4 in a multigraph
to counting 4-cycles in a multigraph. This requires a careful analysis of all possible multigraphs on four edges and
extends a similar reasoning used in the other direction of the reduction. Then, we reduce counting 4-cycles in a multigraph
with multiplicities bounded by $U$ to counting 4-cycles in simple graphs of roughly the same size as the original multigraph. This is obtained by first
designing an efficient reduction to a constant number of instances of counting 4-cycles in simple graphs for small (constant) values of $U$.
Then a careful application of polynomial interpolation allows us to obtain $\Oh(\log^4U)$ instances of counting 4-cycles in simple graphs.

The second difficulty is more fundamental. To avoid iterating over all pairs of central nodes, we apply a certain
hierarchical decomposition of both trees known as the top tree decomposition \cite{AlstrupHLT05,BilleGLW15}. A similar decomposition has been
already used by Brodal et al.~\cite{BrodalFMPS13}, but we apply it to both trees simultaneously.
This allows us to decrease the  number of explicitly considered pairs of central nodes to only $\Oh(n\log^{2}n)$
and consider the remaining pairs aggregately in batches. The remaining pairs have a simple structure, but counting
them efficiently requires providing a mechanism for answering certain queries on a tree. This is implemented with
the standard heavy-light decomposition and follows the high-level idea used by Brodal et al.~\cite{BrodalFMPS13}.

\section{Preliminaries}

We consider unrooted trees on $n$ leaves with distinct labels from $\{1,2,\ldots,n\}$, and identify leaves with their labels.
The quartet distance between two such trees $T_1,T_2$ is defined as the number of subsets of four distinct leaves $\{a,b,c,d\}$
(called quartets) such that the subtrees induced by $\{a,b,c,d\}$ in both trees are not related by the same topology.
There are four possible topologies of trees induced by four leaves, see Figure~\ref{fig:four_topologies}.

We work with undirected graphs. Whenever we talk about counting 4-cycles in such a graph we mean simple cycles of length 4,
not necessarily induced. For counting 4-cycles self-loops are irrelevant, but unless stated otherwise there might be multiple edges,
and then we count the cycle multiple times, the product of the multiplicities of its edges.

A multigraph is a triple $(V,E,\mult)$, where $E$ is a set of edges and the function $\mult: E \rightarrow \{1,\ldots,U\}$ returns multiplicities of edges.
Throughout the paper $U$ will be bounded by the total number of edges in the input graph, which sometimes will be much bigger than the size of the currently considered graph.
For simple graphs it holds that $\mult(e)=1$ for all edges $e\in E$ and the function is omitted.

\newcommand{\mysymb}[4]{
\newcommand{#1}{%
  \mathrel{{\ooalign{\hss\raisebox{-#2}{$ #4$}\hss\cr\raisebox{#2}{$ #3$}}}}
}}
\newcommand{\mystack}[7]{
\newcommand{#1}[0]{%
  \mathrel{{\ooalign{
  \hss\cr\raisebox{#2}{$ #3$}
  \hss\cr\raisebox{#4}{$ #5$}
  \hss\cr\raisebox{#6}{$ #7$}}}}\hspace{-0.1cm}
}}

\newcommand{\mirror}[1]{\reflectbox{$ #1$}}
\newcommand{\rotatechar}[1]{%
  \vcenter{\hbox to 1.6ex{\hfil\rotatebox{90}{#1}\hfil}}}
\newcommand{\rotatehor}[1]{%
  \vcenter{\hbox to 1.6ex{\hfil\rotatebox{180}{#1}\hfil}}}
\newcommand{\rotatediag}[1]{%
  \vcenter{\hbox to 1.6ex{\hfil\rotatebox{-45}{#1}\hfil}}}
  
\mysymb{\sG}{0.7ex}{<}{>}
\mysymb{\sI}{0.7ex}{<}{=}
\mysymb{\sJ}{0.5ex}{=}{=}
\mysymb{\sE}{0.63ex}{<}{<}
\mysymb{\sF}{0.9ex}{<}{<}
\mysymb{\sB}{0.61ex}{\angle}{>}

\mystack{\sC}{0.5ex}{\angle}{-0.4ex}{\smallsetminus}{-1.6ex}{-}
\mystack{\sH}{0.5ex}{\angle}{-0.4ex}{\mirror\smallsetminus}{-1.6ex}{-}

\mystack{\sD}{-0.3ex}{\angle}{-0.3ex}{\mirror\angle}{0.67ex}{-}

\mystack{\sA}{0.7ex}{\angle}{-0.3ex}{\smallsetminus}{-0.92ex}{\setminus}

\newcommand{\zeroperp}{\rotatechar{\scriptsize{\textbar}\hspace{-0.055cm}\scriptsize{)}}}
\newcommand{\zerodiag}{\hspace{-0.035cm}\rotatediag{\footnotesize{(}\hspace{-0.057cm}\footnotesize{\textbar}}}

\mystack{\zoz}{0.72ex}{-}{-0.72ex}{-}{0.12ex}{\zerodiag}
\mysymb{\sOII}{0.65ex}{\zeroperp}{=}
\mysymb{\sOV}{0.75ex}{\zeroperp}{<}
\mysymb{\sOVZ}{0.43ex}{\zeroperp}{<}

\newcommand{\test}[1]{(#1+\mirror{#1})}
\newcommand{\cc}[1]{(\# #1)}
\newcommand{\xx}{C_4}
\newcommand{\tc}[1]{t{#1}}
\newcommand{\ff}[1]{t'{#1}}
\mysymb{\zz}{0.5ex}{-}{\angle}
\newcommand{\multi}[2]{\genfrac{[}{]}{0pt}{}{#1}{#2}}

\newcommand{\sumxy}[4]
{#1 \sum_{\substack{x,y\in #4, x<y}} #2 #3 \\
=\frac 12 #1\left(\left(\sum_{x\in #4} #2\right)^2-\sum_{x\in #4} #2^2\right)}
\newcommand{\sumxynobreak}[4]
{#1 \sum_{\substack{x,y\in #4, x<y}} #2 #3
=\frac 12 #1\left(\left(\sum_{x\in #4} #2\right)^2-\sum_{x\in #4} #2^2\right)}

\newcommand{\EminusUV}{(E\setminus E(u))\setminus (E(v)-(u,v))}

\section{Reduction from Counting 4-Cycles to Quartet Distance}

In this section we provide a sequence of reductions from counting 4-cycles in a multigraph to computing the quartet distance between two trees.
See Figure~\ref{fig:plan} for an overview of the reductions.
Consequently, there is no algorithm for quartet distance that runs significantly faster than in $\Oh(n^{1.48})$ time unless we can count 4-cycles faster.
In particular, existence of an $\Oh(n\log n)$ algorithm for the quartet distance would imply a surprisingly fast algorithm for counting 4-cycles.

As a warm-up, we will show how to reduce counting 4-cycles in a simple graph to counting 4-cycles in a simple bipartite graph.
Then, we will show how to construct for a bipartite simple graph two trees in such a way
that the number of 4-cycles in the original graph can be efficiently extracted
from their quartet distance.

\subsection{Warm-up: From Simple Graphs to Simple Bipartite Graphs}\label{se:simple_to_bipartite}

For a given simple graph $G=(V,E)$ we construct a bipartite graph $G'=(V_1\cup V_2,E')$ such that every $v\in V$ corresponds to two nodes $v_1\in V_1$ and $v_2\in V_2$.
For every edge $\{u,v\}\in E$ we create two edges $\{u_1,v_2\}$ and $\{u_2,v_1\}$ in $G'$.
Then every cycle $(a,b,c,d)$ in $G$ corresponds to two cycles $(a_1,b_2,c_1,d_2)$ and $(a_2,b_1,c_2,d_1)$ in $G'$
and there are no other 4-cycles in $G'$.
We conclude that the complexity of counting 4-cycles in simple graphs is asymptotically the same as in bipartite simple graphs.

\subsection{From Simple Bipartite Graphs to Quartet Distance}\label{se:counting_4c_to_quartet_d}

In this section we show how to reduce counting 4-cycles in a simple bipartite graph to computing the quartet distance between two trees.
We first provide some insight into the structure of 4-edge subgraphs of a bipartite graph which we call shapes.

\paragraph{Properties of shapes.} We first consider all nodes with non-zero degrees in a shape.
For instance, nodes in shape $\sB$ have the following (non-zero) degrees: $3,1$ on the left side and $1,1,2$ on the right side.
We call sorted list of non-zero degrees of $V_1$ (respectively $V_2$) in a shape its left (respectively right) representation.
Then two representations separated by a dash form the representation of a shape.
For instance, the representation of $\sB$ is $3,1-2,1,1$.
There are 5 possible left and right representations: $(4),(3,1),(2,2),(2,1,1)$ and $(1,1,1,1)$.
Next, the representation of a shape almost uniquely determines the shape.
For instance, $3,1-1,1,1,1$ corresponds only to one shape $\sC$.
Note that the only representation which does not uniquely describe a shape is $2,1,1-2,1,1$ as it represents two distinct shapes: $\sG$ and $\sH$.
The notion of representations gives us a systematic way to list all 16 possible shapes.
In~Table~\ref{tab:config_to_reflect} we list 6 of them and omit another 6 shapes which are their mirror reflections, that is they are reflections along the vertical axis.
For example, we say that~$\mirror\sC$ is the mirror reflection of $\sC$.
In~Table~\ref{tab:config_not_reflect} we list all the remaining 4 shapes which remain unchanged under mirror reflection.

\begin{table}[t]
\center
\begin{tabular}{c|c|c|c|c|c}
$\sA$ & $\sB$ & $\sC$ & $\sE$ & $\sF$ & $\sI$ \\
$4-1,1,1,1$ & $3,1-2,1,1$ & $3,1-1,1,1,1$  & $2,2-2,1,1$ & $2,2-1,1,1,1$ & $2,1,1-1,1,1,1$ \\
\end{tabular}
\caption{Six possible shapes which change under mirror reflection.}
\label{tab:config_to_reflect}
\vspace{0.5cm}
\begin{tabular}{c|c|c}
$\sD$ & $\sG \quad , \quad \sH$ & $\sJ$\\
$2,2-2,2$ & $2,1,1-2,1,1$ & $1,1,1,1-1,1,1,1$ \\
\end{tabular}
\caption{Four possible shapes which remain unchanged under mirror reflection.}
\label{tab:config_not_reflect}
\end{table}

\paragraph{The reduction.} On a high level, we design the reduction in such a way that the quartet distance between the constructed trees can be obtained by counting particular shapes in the considered simple graph $G=(V_1\cup V_2,E)$ and adding up the results.
Some of the shapes can be counted in linear time, for instance the number of shapes $\sA$ in $G$ is $\cc\sA=\sum_{v\in V_1} {\deg(v) \choose 4}$. 
However, it is more difficult to compute $\cc\sI$, not to mention $\cc\sD$ which is exactly the sought number of 4-cycles.
We will relate these numbers to $\cc\sD=:\xx$ and then express the quartet distance as a multiple of the number of 4-cycles plus some value that we can compute in linear time.
Solving this simple equation gives us $\xx$.

Given a bipartite graph $G=(V_1\cup V_2,E)$ we construct the trees $T_1$ and $T_2$ in the following way.
Tree $T_i$ consists of nodes representing all non-isolated nodes from $V_i$ attached to the root and nodes representing edges from $E$ attached to the node corresponding to their endpoint from $V_i$.
Note that there is exactly one such node, as $G$ is bipartite and there is a bijection between the leaves of $T_i$ and $E$.
See Figure~\ref{fig:graph_to_trees} for an example.

\FIGURE{b}{.95}{graph_to_trees}{Instance of the quartet distance problem obtained from the bipartite graph on the left.}

\paragraph{Quartets.} Recall that in the quartet distance between trees we consider subtrees induced by four leaves.
The above construction guarantees that the subtree of $T_1$ (respectively $T_2$) induced by a set of four leaves $L=\{e_1,e_2,e_3,e_4\}$ is uniquely determined by the left (respectively right) representation of the graph consisting of edges $\{e_1,\ldots,e_4\}$.
See Figure~\ref{fig:shapes_in_one_tree}.

\FIGURE{t}{.8}{shapes_in_one_tree}{Left: all five possible representations and their corresponding trees.
Right: the corresponding tree topology is either a star (upper row) or a butterfly (lower row).}

As the quartet distance between $T_1$ and $T_2$, denoted as $\QD$, is the number of sets of four leaves that are not related by the same topology in both trees, $\QD$ equals $\#\text{ of leaves} \choose 4$ minus the number of subsets of four leaves that are related by the same topology in both trees.
From now on we will focus on computing only the latter number.
The agreeing topologies can be either stars (unresolved quartets, upper row in Figure~\ref{fig:shapes_in_one_tree}) or butterflies (resolved quartets, bottom row in Figure~\ref{fig:shapes_in_one_tree}).
In stars the order of labels on the leaves do not matter, so it is enough that the quartet induces a star in both trees.
There are five shapes which induce a star in both trees: $\sA, \mirror{\sA}, \sC, \mirror{\sC}$ and $\sJ$.
Next, in order to ensure that a quartet induces a butterfly in both trees, its left and right representations must be either $2,2$ or $2,1,1$.
As the labels on leaves do matter for butterflies, among all the 5 shapes inducing them (recall that there are two shapes represented by $2,1,1-2,1,1$), only $\sG$ has matching labels on leaves.
See Figure~\ref{fig:five_butterflies}.
Summarising, we obtain the following equality:
$$\binom{\#\text{ of leaves}}{4}-\QD= \left( \cc\sA + \cc{\mirror{\sA}} +\cc\sC +\cc{\mirror{\sC}}+\cc\sJ \right)+ \cc\sG$$

\newcommand{\sEmirror}{\mirror\sE}
\FIGURE{h}{.9}{five_butterflies}{There are 5 different ways of how a quartet of leaves can induce a butterfly simultaneously in both trees. They correspond to the following 5 shapes (starting from the upper left corner in the clockwise order): $\protect\sD,\protect\sE,\protect\sH,\protect\sG,\protect\sEmirror$.
Among them, only $\protect\sG$ has matching labels.}

Hence, we need to compute $\cc\sA,\cc{\mirror{\sA}},\cc\sC,\cc{\mirror{\sC}},\cc\sJ$ and $\cc\sG$ to obtain $\QD$.
In the following lemmas we show that all the above values except for $\cc\sJ$ can be computed in linear time, whereas $\cc\sJ$ is directly related to the number of 4-cycles in $G$.
More precisely, for every shape $R$ we will express its corresponding value as $\cc{R}=t_{R}+d_R\xx$, where $t_R$ is an auxiliary value which can be computed from the considered bipartite graph $G$ in linear time and $d_R$ is a constant.
For instance, $\cc\sA=\tc\sA$ means that $\cc\sA$ can be obtained by computing a certain auxiliary value in linear time.
The main lemma of this section is that $\cc\sJ=\tc{\sJ} + \xx$ which implies that we can compute $\cc\sJ$ from the number of 4-cycles and vice versa in linear time.

Let $m=|E|$, $d(u)$ denotes degree of the node $u$ and $N(u)$ the set of its neighbors.
As for now we assume that $E \subseteq V_1 \times V_2$ (recall that the graph is bipartite) and $E$ consists of ordered pairs $(u,v)$. $u$ denotes a node from $V_1$ and $v$ from~$V_2$.
\begin{lemma}\label{le:calculations_easy}
 $\cc\sA,\cc\sB,\cc\sC$ and $\cc\sG$ can be computed from $G$ in linear time.
\end{lemma}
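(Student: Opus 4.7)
The plan is to express each of the four counts as a closed-form degree-indexed sum plus at most an edge-correction over $E$, so that the whole lemma reduces to iterating over $V_1$, $V_2$, and $E$. The shape $\sA$ is immediate: picking the degree-$4$ left node and any four of its neighbors yields $\cc\sA = \sum_{u\in V_1}\binom{d(u)}{4}$.

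For $\cc\sB$, I would locate the unique common right neighbor $v_1$ of the shape's two left endpoints. Iterating over $v_1$ and then over the degree-$3$ node $u_1\in N(v_1)$, the degree-$1$ partner $u_2$ lies in $N(v_1)\setminus\{u_1\}$ (giving $d(v_1)-1$ choices) and the two other right neighbors of $u_1$ come from $N(u_1)\setminus\{v_1\}$ (giving $\binom{d(u_1)-1}{2}$ choices), so
\[ \cc\sB = \sum_{v_1\in V_2}\sum_{u_1\in N(v_1)}(d(v_1)-1)\binom{d(u_1)-1}{2}, \]
which can be evaluated in $\sum_{v_1}d(v_1)=O(m)$ time. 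For $\cc\sC$ I would use inclusion-exclusion against this: letting $T' = \sum_{u_1\in V_1}\binom{d(u_1)}{3}(m-d(u_1))$ count pairs consisting of a $3$-star at $u_1$ together with an arbitrary edge not incident to $u_1$, every such pair induces either shape $\sC$ or shape $\sB$ (depending on whether the extra edge's right endpoint avoids the three star leaves), and each instance of either shape is produced in exactly one way, so $\cc\sC = T' - \cc\sB$.

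The main technical step is $\cc\sG$, counting two vertex-disjoint wedges centered at some $u_1\in V_1$ and $v_3\in V_2$. Ignoring disjointness, the count factors as $\bigl(\sum_{u_1}\binom{d(u_1)}{2}\bigr)\bigl(\sum_{v_3}\binom{d(v_3)}{2}\bigr)$, but this is accurate only when $(u_1,v_3)\notin E$; for $(u_1,v_3)\in E$ the left wedge's other leaves must be drawn from $N(u_1)\setminus\{v_3\}$ and the right wedge's from $N(v_3)\setminus\{u_1\}$. The corrected expression
\[ \cc\sG = \Bigl(\sum_{u_1}\binom{d(u_1)}{2}\Bigr)\Bigl(\sum_{v_3}\binom{d(v_3)}{2}\Bigr) + \sum_{(u_1,v_3)\in E}\Bigl[\binom{d(u_1)-1}{2}\binom{d(v_3)-1}{2} - \binom{d(u_1)}{2}\binom{d(v_3)}{2}\Bigr] \]
is then $O(m)$ overall. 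The main obstacle is to verify that the inclusion-exclusion for $\sC$ and the disjointness correction for $\sG$ really count every $4$-edge subgraph of the prescribed shape exactly once, avoiding subtle miscounts from configurations where the two halves share a vertex or an edge.
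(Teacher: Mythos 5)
Your formulas for $\cc\sA$, $\cc\sB$ and $\cc\sC$ coincide with the paper's (your double sum for $\cc\sB$ is the paper's single sum over edges, written by first fixing $v_1$ and then $u_1\in N(v_1)$, and the inclusion--exclusion $\cc\sC=T'-\cc\sB$ is identical). The one place you genuinely depart from the paper is $\cc\sG$. The paper anchors on an edge $(u,v)$ of the left-centered wedge, multiplies $(d(u)-1)$ by the number of right-centered wedges not centered at $v$, and then repairs the overcount by subtracting $\cc{\mirror\sB}+\cc\zz+2\cc\sB$ before halving (each copy of $\sG$ being found from both edges of its left wedge). You anchor instead on the ordered pair of wedge centers and correct only along edges of $G$. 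Your correction is valid: because the left wedge's leaves lie in $V_2$ and the right wedge's leaves lie in $V_1$, the only possible vertex coincidences between the two wedges are center--leaf ones, each of which forces $(u_1,v_3)\in E$; and for an adjacent pair of centers, restricting the leaf choices to $N(u_1)\setminus\{v_3\}$ and $N(v_3)\setminus\{u_1\}$ is exactly equivalent to vertex-disjointness of the two wedges. (A sanity check on $G=C_4$, where the unrestricted product equals $4$ and the edge correction equals $-4$, confirms the formula.) So the ``obstacle'' you name at the end is already disposed of by this parity-of-sides observation, and there is no gap. Your route buys a self-contained $\Oh(m)$ formula for $\cc\sG$ that needs neither the final division by two nor the auxiliary quantities $\cc\sB$, $\cc{\mirror\sB}$, $\cc\zz$; the paper's edge-anchored version, on the other hand, is the template that is later generalized verbatim to multigraphs in the proof of Theorem~\ref{thm:counting_multi}.
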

\begin{proof} 
We compute the first three values directly:
\begin{enumerate}
 \item $\cc\sA=\sum_{u\in V_1}\binom{d(u)}{4}$
 \item $\cc\sB=\sum_{(u,v)\in E}\binom{d(u)-1}{2}(d(v)-1)$
 \item $\cc\sC=\left(\sum_{u\in V_1}\binom{d(u)}{3}(m-d(u))\right)-\cc\sB$
\end{enumerate}
 \noindent
Calculations for the mirror reflections of the above shapes are symmetric. 
In order to compute $\cc\sG$ we need auxiliary values: $\cc>=\sum_{v\in V_2}\binom{d(v)}{2}$ and $\cc\zz=\sum_{(u,v)\in E}(d(u)-1)(d(v)-1)$. Then:
$$\cc\sG=\frac 12 \left(\sum_{(u,v)\in E}(d(u)-1)\left(\cc>-\binom{d(v)}{2}\right)-\cc{\mirror\sB}-\cc{\zz}-2\cc\sB\right)$$

We derive the above formula in steps.
For each edge $(u,v)$ we count shapes in which the edge is one of the sides of $<$ in $\sG$.
See Figure~\ref{fig:example}(a) with names of all the nodes in $\sG$.
First, we have $(d(u)-1)$ possibilities for the node $w$ which is incident to $u$, but different than $v$.
Second, we need to account for the $>$ parts of $\sG$ that do not have the corner in node $v$.
There are $\left(\cc>-\binom{d(v)}{2}\right)$ of them and we obtain $\sum_{(u,v)\in E}(d(u)-1)\left(\cc>-\binom{d(v)}{2}\right)$.

\newcommand{\sBmirrored}{\mirror\sB}
\newcommand{\sZmirrored}{\mirror\zz}

\FIGURE{h}{0.8}{example}{While computing $\cc{\protect\sG}$ we iterate over all edges $(u,v)$. (a) Naming of
vertices, (b),(c),(d)~subtracted shapes corresponding to respectively $\protect\sBmirrored,\protect\sZmirrored,\protect\sB$.}

Now we have counted too many shapes, because we did not ensure that the node $c$ is different than $w$ and that both $a$ and $b$ are different than $u$.
To account for $w=c$ we subtract~$\cc{\mirror\sB}$ (when both $a$ and $b$ are different than $u$, see Figure~\ref{fig:example}(b)) and $\cc{\mirror\zz}$ (when $a$ or $b$ coincide with $u$, see Figure~\ref{fig:example}(c)).
Next, for the case when $c\ne w$ we subtract $2\cc\sB$ for the case when $a$ or $b$ coincide with $u$, see Figure~\ref{fig:example}(d).
This term is multiplied by 2, because we counted it both for the distinguished edge $(u,v)$ and $(u,w)$.

Finally, we need to divide the whole expression by 2, because every shape $\sG$ is counted twice, both for the distinguished edge $(u,v)$ and $(u,w)$.
\end{proof}
\noindent
To sum up, the above lemma implies that the number of shapes $\sA,\sB,\sC,\sG$ can be computed in linear time from $G$.
We do not provide calculations for the mirror reflections of these shapes ($\mirror\sA,\mirror\sB$ and $\mirror\sC$), as it suffices to rewrite all the expressions replacing nodes from $V_1$ with $V_2$ and vice-versa. Similarly, we do not mention the mirror reflections in the following lemmas.

\newcommand{\sumxye}[5]
{#1 \sum_{\substack{x,y\in #4, x<y}} #2 #3 #5
=\frac 12 #1\left(\left(\sum_{x\in #4} #2\right)^2-\sum_{x\in #4} #2^2\right)}

\begin{lemma}\label{le:calculations_hard}
 The following equalities hold:
 \begin{enumerate}
  \item $\cc\sE=\tc{\sE}-2\xx$
  \item $\cc\sF=\tc{\sF}+\xx$
  \item $\cc\sH=\tc{\sH}+4\xx$
  \item $\cc\sI=\tc{\sI}-2\xx$
 \end{enumerate}
\end{lemma}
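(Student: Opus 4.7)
The plan is to prove each of the four identities by exhibiting a linear-time auxiliary quantity $t'_R$ whose combinatorial expansion equals $\cc R$ plus contributions from \emph{degenerate} configurations in which some vertices of the intended shape coincide. These degenerate configurations always realize other shapes in the table, most importantly the 4-cycle $\sD$, and the coefficients $\pm 2$ or $+4$ in front of $\xx$ emerge from counting how many times a given 4-cycle is produced by the iteration. I would prove the items in the order $\sE \to \sF \to \sH \to \sI$, so each identity can reuse those already established together with Lemma~\ref{le:calculations_easy}.

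For $\cc\sE$ and $\cc\sF$, whose left representations are both $2,2$, I would anchor at the intended degree-2 right node and at the pair of degree-2 left nodes, respectively, and define
\[
t'_\sE \;=\; \sum_{v\in V_2}\sum_{\{u_1,u_2\}\subseteq N(v)}(d(u_1)-1)(d(u_2)-1), \qquad
t'_\sF \;=\; \sum_{u_1<u_2}\binom{d(u_1)}{2}\binom{d(u_2)}{2}.
\]
A short case analysis on whether the two ``extra'' right endpoints coincide with one another yields $t'_\sE = \cc\sE + 2\xx$ (each 4-cycle is counted twice as $v$ ranges over its two degree-2 right vertices) and $t'_\sF = \cc\sF + \cc\sE + \xx$. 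Substituting the first identity into the second gives $\cc\sF = (t'_\sF - t'_\sE) + \xx$.

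For $\cc\sH$, whose shape is a length-3 path plus a disjoint edge, I would anchor at the unique middle edge $(u,v)$ of the path, take the product of the numbers of extensions on each side, and multiply by the number of edges avoiding both $u$ and $v$:
\[
t'_\sH \;=\; \sum_{(u,v)\in E}(d(u)-1)(d(v)-1)\bigl(m-d(u)-d(v)+1\bigr).
\]
A case analysis on whether the fourth edge is disjoint from the two extension endpoints, or shares the $u$-extension, the $v$-extension, or both, produces $\sH$, $\sE$, $\mirror{\sE}$ and $\sD$, respectively. Each $\sE$ and each $\mirror{\sE}$ is counted twice (two admissible choices of middle edge within the corresponding path) and each 4-cycle four times (one choice of middle for each of its four edges), so substituting the identities for $\cc\sE$ and $\cc{\mirror{\sE}}$ isolates $\cc\sH = t_\sH + 4\xx$ for a suitable linear-time $t_\sH$.

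Finally, for $\cc\sI$ I would anchor at the intended degree-2 left node $u^*$ and independently pick two incident edges of $u^*$ and two further edges from $E\setminus E(u^*)$:
\[
t'_\sI \;=\; \sum_{u^*\in V_1}\binom{d(u^*)}{2}\binom{m-d(u^*)}{2}.
\]
A case analysis on whether the two extra edges share a left endpoint, and on how their right endpoints interact with the anchor pair, produces every four-edge shape whose left representation extends $2,1,1$, namely $\sI$, $\sF$, $\sE$, $\mirror{\sE}$, $\mirror{\sB}$, $\sG$, $\sH$ and $\sD$, each with a multiplicity equal to the number of its degree-2 left vertices. Combining with Lemma~\ref{le:calculations_easy} for $\cc\sG$ and $\cc{\mirror{\sB}}$ and with the identities just established yields $\cc\sI = t_\sI - 2\xx$, where the coefficient $2$ again reflects the two degree-2 left vertices of a 4-cycle. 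Mirror reflections are handled identically by interchanging $V_1$ and $V_2$. The main obstacle is the bookkeeping in the $\cc\sH$ and $\cc\sI$ cases: one must enumerate all degenerate possibilities and, for $\cc\sH$, distinguish the shape from $\sG$ despite both sharing the representation $2,1,1-2,1,1$; once these enumerations are complete, the four identities follow by back-substitution.
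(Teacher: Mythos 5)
Your proof is correct and follows essentially the same route as the paper: the auxiliary quantities for $\sE$, $\sF$ and $\sH$ are identical to the paper's $\tc\sE$, $\ff\sF$ and $\ff\sH$, with the same degenerate-shape decompositions and the same multiplicities ($2$ for each $\sE$ and $\mirror\sE$, $4$ for each $\sD$ in the $\sH$ case). The only deviation is for $\sI$, where you anchor at the left degree-2 node via $\sum_{u^*}\binom{d(u^*)}{2}\binom{m-d(u^*)}{2}$ instead of the paper's $\sum_{(u,v)\in E}(d(u)-1)\binom{m-d(u)-d(v)+1}{2}$; this admits a few more degenerate shapes ($\mirror\sE$ and a direct $\sD$ term) but avoids the paper's factor $\tfrac12$, and the coefficient of $\xx$ still sums to $-2$, so the variant is equally valid.
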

\begin{proof} Let $\ff{_R}$ be an auxiliary variable used to express $t_R$.
 \begin{enumerate}
  \item  Let $\tc\sE=\sumxy{\sum_{v\in V_2}}{(d(x)-1)}{(d(y)-1)}{N(v)}$.\\
  Then: $\cc\sE=\tc\sE -2\cc\sD=\tc\sE-2\xx$.
  
  \item Let $\ff\sF=\sumxye{}{\binom{d(x)}{2}}{\binom{d(y)}{2}}{V_1}{}$.\\
  Then:
  $\cc\sF=\ff\sF - \cc\sE -\cc\sD=\ff\sF - (\tc{\sE}-2\xx) - \xx = \tc{\sF}+\xx$.
  
  \item 
  Let $\ff\sH=\sum_{(u,v)\in E}(d(u)-1)(d(v)-1)(m-d(u)-d(v)+1)$. \\
  Then: $\cc\sH=\ff\sH-2\cc\sE -2\cc{\mirror{\sE}} - 4\cc\sD \\
  =\ff\sH -2(\tc{\sE}-2\xx)-2(\tc{\mirror\sE}-2\xx)-4\xx =\tc{\sH}+4\xx$. 
 
  \item Let $\ff\sI=\sum_{(u,v)\in E}(d(u)-1)\binom{m-d(u)-d(v)+1}{2}$.\\  
  Then:  $\cc\sI=\frac 12 \left(\ff\sI-2 \cc\sG - \cc{\mirror\sB}-\cc\sH-2\cc\sE-4\cc\sF \right)\\
  =\frac 12 \left( \ff\sI- 2\tc{\sG} - \tc{\mirror\sB} -(\tc{\sH}+4\xx) - 2(\tc{\sE}-2\xx) -4(\tc{\sF}+\xx) \right)\\
  =\tc{\sI}+\frac 12 \left(-4\xx+ 4\xx-4\xx \right)
  =\tc{\sI}-2\xx$. \qedhere
 \end{enumerate}
  
\end{proof}

\begin{lemma}\label{le:m4}
$\cc\sJ=\tc{\sJ} + \xx$, where $\tc{\sJ}$ can be computed from $G$ in $\Oh(|E|)$ time.
\end{lemma}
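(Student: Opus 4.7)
The plan is to exploit the fact that every $4$-element subset of edges of $G$ realizes exactly one of the $16$ shapes enumerated in Tables~\ref{tab:config_to_reflect} and~\ref{tab:config_not_reflect}. Summing the counts over all shapes therefore partitions $\binom{m}{4}$:
\begin{equation*}
\binom{m}{4} \;=\; \cc\sJ + \cc\sA + \cc{\mirror\sA} + \cc\sB + \cc{\mirror\sB} + \cc\sC + \cc{\mirror\sC} + \cc\sD + \cc\sE + \cc{\mirror\sE} + \cc\sF + \cc{\mirror\sF} + \cc\sG + \cc\sH + \cc\sI + \cc{\mirror\sI}.
\end{equation*}
Since the shape $\sJ$ (a matching of four edges on eight distinct endpoints) is the one remaining topology, it suffices to rearrange the identity above to isolate $\cc\sJ$.

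Next, I would substitute the expressions supplied by Lemmas~\ref{le:calculations_easy} and~\ref{le:calculations_hard} (together with their mirror-symmetric analogues). Each of $\cc\sA, \cc{\mirror\sA}, \cc\sB, \cc{\mirror\sB}, \cc\sC, \cc{\mirror\sC}, \cc\sG$ is already of the form $\tc{R}$ and contributes no $\xx$-term, while $\cc\sD = \xx$ and the remaining counts have the form $\tc{R} + d_R \xx$ with $d_R$ as tabulated in Lemma~\ref{le:calculations_hard}. Collecting coefficients of $\xx$ on the right-hand side of the partition identity (excluding $\cc\sJ$) yields
\[
1 + (-2) + (-2) + 1 + 1 + 0 + 4 + (-2) + (-2) \;=\; -1,
\]
from the shapes $\sD, \sE, \mirror\sE, \sF, \mirror\sF, \sG, \sH, \sI, \mirror\sI$ respectively. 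Moving this contribution to the left-hand side adds $+\xx$ to $\cc\sJ$.

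Defining
\[
\tc{\sJ} \;:=\; \binom{m}{4} - \cc\sA - \cc{\mirror\sA} - \cc\sB - \cc{\mirror\sB} - \cc\sC - \cc{\mirror\sC} - \cc\sG - \tc{\sE} - \tc{\mirror\sE} - \tc{\sF} - \tc{\mirror\sF} - \tc{\sH} - \tc{\sI} - \tc{\mirror\sI},
\]
we then obtain exactly $\cc\sJ = \tc{\sJ} + \xx$, as required. The quantity $\tc{\sJ}$ is computable in $\Oh(|E|)$ time: Lemma~\ref{le:calculations_easy} handles the shapes $\sA, \sB, \sC, \sG$ (and their reflections) in linear time, and the auxiliary values $\tc{\sE}, \tc{\sF}, \tc{\sH}, \tc{\sI}$ together with their reflections are defined explicitly in the proof of Lemma~\ref{le:calculations_hard} as sums over $V_1$, $V_2$, or $E$, each evaluable in $\Oh(|E|)$ time.

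The main thing to verify carefully is that the $16$ shapes really do partition all $4$-edge subsets (so the identity for $\binom{m}{4}$ is exact), and that the arithmetic of the $\xx$-coefficients balances to $-1$; both are direct bookkeeping, which is why this lemma is essentially a consequence of the earlier ones rather than requiring a new combinatorial idea.
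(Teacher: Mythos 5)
Your proof is correct, but it takes a genuinely different route from the paper's. The paper builds $\cc\sJ$ constructively: it first counts 3-matchings $\cc\equiv$ (via its own inclusion--exclusion), then extends each 3-matching by one of the $m-3$ remaining edges, subtracts the configurations where the new edge touches the matching (which are exactly $\sH$, $\sI$ and $\mirror\sI$, with multiplicities $1,2,2$), and divides by $4$; only $\cc\sH,\cc\sI,\cc{\mirror\sI}$ from Lemma~\ref{le:calculations_hard} are needed. You instead use complementary counting over the full classification: since the $16$ shapes partition the $\binom{m}{4}$ four-edge subsets of a simple bipartite graph, you isolate $\cc\sJ$ and check that the $\xx$-coefficients of the other $15$ shapes sum to $-1$ (your arithmetic is right, including treating $\cc\sD=\xx$ as contributing $+1$ and $\sG,\sH$ as self-mirror). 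Both arguments are sound; yours is arguably the cleaner derivation once Lemmas~\ref{le:calculations_easy} and~\ref{le:calculations_hard} are in hand, but it leans on two things the paper's version does not: the exhaustiveness of the $16$-shape classification (which the paper does establish via representations, so this is a legitimate dependency), and \emph{all} of the shape formulas including the mirror variants. The paper's constructive route has the practical advantage that it transfers almost verbatim to the multigraph generalisation (Theorem~\ref{thm:counting_multi} in Appendix~\ref{se:missing_proofs}), where the total count $\multi{E}{4}$ and the shape taxonomy both become more delicate because of parallel edges, whereas a partition-based argument would force one to classify and count every additional multi-edge shape.
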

\begin{proof} 
We first compute $\cc\zz=\sum_{(u,v)\in E} (d(u)-1)(d(v)-1)$\\  and $\cc\leq={\frac 12\left(\sum_{(u,v)\in E} (d(u)-1)(m-d(u)-d(v)+1)-\cc\zz\right)}$. Then:
 \begin{enumerate}
  \item $\cc\equiv = \frac 13 \big(\sum_{(u,v)\in E}\binom{m-d(u)-d(v)+1}{2} - \cc\leq - \cc\geq  \big)$
  \item $\cc\sJ = \frac 14 \left((m-3)\cc\equiv -\cc\sH-2\cc\sI-2\cc{\mirror\sI}  \right)\\
 =\frac 14 \left((m-3)\tc\equiv -(\tc{\sH}+4\xx)-2(\tc{\sI}-2\xx)-2(\tc{\mirror\sI}-2\xx)  \right)\\
 =\tc{\sJ} +\frac 14 ( -4\xx+4\xx+4\xx)=\tc{\sJ}+\xx$ \qedhere
 \end{enumerate}
\end{proof}

\begin{theorem}
 Counting 4-cycles in a graph with $m$ edges can be reduced in linear time to computing the quartet distance between two trees on $m$ leaves.
\end{theorem}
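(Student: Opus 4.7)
The plan is to simply assemble the reductions and lemmas already developed in the section. Given a graph with $m$ edges, first apply the reduction of Section~\ref{se:simple_to_bipartite} to obtain, in linear time, a simple bipartite graph $G=(V_1\cup V_2,E)$ with $\Oh(m)$ edges whose number of 4-cycles is exactly twice the number of 4-cycles in the original graph. It therefore suffices to count 4-cycles in $G$, i.e., to compute $\xx$.

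Next, build the two trees $T_1,T_2$ from $G$ as described at the start of Section~\ref{se:counting_4c_to_quartet_d}. The construction takes $\Oh(|E|)=\Oh(m)$ time, and each tree has exactly $|E|=\Oh(m)$ leaves, one per edge of $G$. Invoke the assumed algorithm for the quartet distance to obtain $\QD$.

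Now use the master identity established earlier,
\[
\binom{|E|}{4}-\QD= \cc\sA + \cc{\mirror{\sA}} +\cc\sC +\cc{\mirror{\sC}}+\cc\sJ + \cc\sG,
\]
and plug in the known expressions. By Lemma~\ref{le:calculations_easy} the quantities $\cc\sA,\cc{\mirror{\sA}},\cc\sC,\cc{\mirror{\sC}},\cc\sG$ are all computable from $G$ in $\Oh(m)$ time. By Lemma~\ref{le:m4} we have $\cc\sJ=\tc{\sJ}+\xx$ with $\tc{\sJ}$ also computable in $\Oh(m)$ time. Substituting and solving the resulting linear equation for $\xx$ yields
\[
\xx=\binom{|E|}{4}-\QD-\cc\sA-\cc{\mirror{\sA}}-\cc\sC-\cc{\mirror{\sC}}-\cc\sG-\tc{\sJ},
\]
and dividing by $2$ recovers the 4-cycle count of the original graph. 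The total extra work beyond the call to the quartet-distance oracle is $\Oh(m)$.

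There is essentially no obstacle left, since Lemmas~\ref{le:calculations_easy} and \ref{le:m4} have already done the combinatorial work of expressing every relevant shape count as an explicitly computable linear expression in $\xx$. The only thing to verify in writing up the proof is the arithmetic check that the coefficient of $\xx$ in the right-hand side of the master identity is exactly $1$ (coming from $\cc\sJ$), so that the equation in $\xx$ is indeed solvable, which is immediate from Lemmas~\ref{le:calculations_easy} and~\ref{le:m4}.
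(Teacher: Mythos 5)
Your proposal is correct and follows exactly the route the paper intends: reduce to a bipartite simple graph (Section~\ref{se:simple_to_bipartite}), build the two trees, use the identity $\binom{\#\text{ leaves}}{4}-\QD=\cc\sA+\cc{\mirror\sA}+\cc\sC+\cc{\mirror\sC}+\cc\sJ+\cc\sG$, compute everything except $\cc\sJ$ in linear time via Lemma~\ref{le:calculations_easy}, and solve for $\xx$ using $\cc\sJ=\tc\sJ+\xx$ from Lemma~\ref{le:m4}, finally halving to undo the bipartite doubling. The paper states the theorem without a separate proof precisely because this assembly of the preceding lemmas is the proof, so there is nothing to add.
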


\subsection{From Multigraphs to Simple Graphs}

In this section we show how to count 4-cycles in multigraphs using a polylogarithmic number of black-box calls to counting 4-cycles in simple graphs.
These reductions do not require the graphs to be bipartite.
For the sake of simplicity, we did not optimize constants in these proofs.

\subsubsection{Small Edge Multiplicities.}\label{se:small_mult}
First we consider the case in which every edge in the input multigraph $G=(V,E,\mult)$ has multiplicity bounded by some constant $c$.
To present our reduction we slightly extend the $\mult$ function and allow that it also returns $\star$ as the multiplicity of an edge.
We will call such $\star$-edges \textit{special}.
In terms of counting 4-cycles, these edges have multiplicity 1.
So in this subsection we have $\mult: E\rightarrow\{1,2,\ldots,c,\star\}$.
We start with showing how to count 4-cycles with a particular number of special edges in such multigraphs using black-box calls to the procedure counting 4-cycles in multigraphs with no special edges.
Formally, we need to compute $y^\star(G)=(y_0,y_1,y_2,y_3,y_4)$ where $y_k$ is the number of 4-cycles in $G$ with exactly $k$ special edges.

\begin{lemma}\label{le:cycles_special_edges}
 For a graph $G=(V,E,\mult)$ with special edges and edge multiplicities bounded by $c$, we can compute $y^\star(G)$ in linear time with a constant number of black-box calls to counting 4-cycles in multigraphs with edge multiplicities bounded by $c$ and of asymptotically the same size as $G$.
\end{lemma}
\begin{proof}
 For a constant $q$, let $\Phi_q(G)$ be the graph obtained from $G$ in the following way:
 \begin{itemize}
  \item For every node $v\in V$ we create $q$ nodes $v^{(i)}$, for all $0\leq i<q$.
  \item For every non-special edge $e=\{u,v\}\in E$ ($\mult(e)\ne \star$), we create $q$ edges $e^{(i)}=\{u^{(i)},v^{(i)}\}$ between the $i$-th copies of nodes $u$ and $v$, for all $0\leq i<q$.
  Each such edge preserves the multiplicity of the original edge, $\mult(e)$.
  \item For every special edge $e=\{u,v\}\in E$ ($\mult(e)= \star$), we create $q^2$ edges $e^{(i,j)}=\{u^{(i)},v^{(j)}\}$ between the $i$-th copy of $u$ and $j$-th copy of $v$, for all $0\leq i,j<q$.
  Each such edge has multiplicity $1$.
 \end{itemize}
 
\FIGURE{h}{0.7}{bad_cycles_corrected}{Two kinds of bad cycles introduced in $\Phi_q(G)$, they can pass through nodes corresponding to either two or three distinct nodes from $G$.}
 
 Observe that every 4-cycle in $G$ with either 0 or 1 special edge corresponds to $q$ cycles in $\Phi_q(G)$.
 Similarly, 4-cycle with $k$ special edges in $G$ corresponds to $q^k$ cycles in $\Phi_q(G)$ for $k>1$.
 These are the only 4-cycles in $\Phi_q(G)$ without two copies of the same node.
 However, in $\Phi_q(G)$ we also introduced 4-cycles passing through two copies of one node (we call them ``bad cycles''), see Figure~\ref{fig:bad_cycles_corrected}.
 Notice that each special edge introduces $\binom{q}{2}^2$ cycles passing through nodes corresponding to exactly two distinct nodes of $G$.
 Let $\spec(v)$ be the number of special edges incident to the node $v$ in $G$.
Observe that there are $\sum_{v\in V} \binom{\spec(v)}{2}\binom{q}{2}q^2$ cycles in $\Phi_q(G)$ passing through nodes corresponding to exactly three distinct nodes of $G$.
 Hence we can count all the bad cycles in linear time.
  Recall that $y_k$ is the number of 4-cycles in $G$ with exactly $k$ special edges, so we have:
 $$\#C_4\left(\Phi_q(G)\right) = \sum_{k=0}^4 y_k\cdot q^{\max(k,1)} + \#\text{bad cycles}(\Phi_q(G))$$
 
 Notice that for any $q$, $\Phi_q(G)$ has no special edges, so we can compute $\#C_4(\Phi_q(G))$ using one black-box call to counting 4-cycles in a multigraph with no special edges, with multiplicities bounded by $c$, $q|V|$ nodes and at most $q^2|E|$ edges.
 Next, $y_0$ can be retrieved by counting 4-cycles with $G$ restricted to the non-special edges and similarly $y_4$ by considering the graph consisting of only the special edges from $G$.
 Hence we are left with 3 unknown variables $y_1,y_2,y_3$. 
 By computing $\#C_4(\Phi_q(G))$ for three distinct values of $q$, e.g. 1,2 and 3, we obtain the following system of linear equations:
 
 \newcommand{\EqFromMultiToSimple}[2]{y_1\cdot 1 + y_2\cdot #1 +y_3\cdot #1^2 &=#2\#C_4(\Phi_{#1}(G))- \#\text{bad cycles}(\Phi_{#1}(G))-y_0\cdot#1 - y_4\cdot#1^4}
 
 \begin{align*}
 \EqFromMultiToSimple{1}{\ \ \ }\\
 \EqFromMultiToSimple{2}{\frac12(})\\
 \EqFromMultiToSimple{3}{\frac13(})
 \end{align*}

 \noindent
These equations are linearly independent, so we can solve the system for $y_{1},y_{2},y_{3}$.
\end{proof}

Using the above lemma we describe a recursive approach for counting 4-cycles in multigraphs with edge multiplicities bounded by $c$, that at the bottom level uses a black-box call to counting 4-cycles in simple graphs.
We first make all edges with multiplicity $c$ special and apply Lemma \ref{le:cycles_special_edges} for the graph with special edges and multiplicities bounded by $c-1$.
We stress that inside Lemma~\ref{le:cycles_special_edges} we create graphs with no special edges and with edge multiplicity bounded by $c-1$.
Finally, by grouping the cycles by the number of special edges, we count the number of 4-cycles in the original graph.

\begin{algorithm}[h]
\begin{algorithmic}[1]
  \Function{Count$C_4$}{$G=(V,E,\mult),c$}
  \If{$c=1$}
    \State \Return $\#C_4(G)$ \Comment{this graph is simple}
  \EndIf
  \State $\mult'(e):=\begin{cases}
    \star&, \mult(e)=c\\
    \mult(e)&,\mult(e)<c
    \end{cases}$
  \State $G':=(V,E,\mult')$ \Comment{with special edges and multiplicities bounded by $c-1$}
  \State $y:=y^\star(G')$ \Comment{applying Lemma \ref{le:cycles_special_edges} that uses \textsc{Count}$C_4(\cdot, c-1)$}
  \State \Return $\sum_k c^k\cdot y_k$
  \EndFunction
\end{algorithmic}
\caption{The recursive procedure of counting 4-cycles in multigraphs.}
\label{alg:naive_coloring}
\end{algorithm}

Observe that the total number of recursive calls is exponential in $c$ and in every recursive call the graph increases by a constant factor.
As $c$ is a constant, all the time we operate on graphs asymptotically of the same size as the original multigraph and we consider a constant number of them.
We remark that the exponential dependency on $c$ can be avoided by applying a slightly more complex procedure.
However, this is irrelevant for the statement of the following corollary.

\begin{corollary}\label{cor:small_to_simple}
 We can count 4-cycles in a multigraph $G$ with edge multiplicities bounded by a constant $c$ using a constant number of black-box call to counting 4-cycles in a simple graph of asymptotically the same size as $G$.
\end{corollary}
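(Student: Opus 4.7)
The plan is to combine the construction $G'$ from $G$ described above with a single black-box invocation of a 4-cycle counter. First, I would argue that the construction satisfies the size requirement: since each vertex of $G$ is replaced by $c$ copies and each edge $\{u,v\}$ with multiplicity $k \leq c$ contributes $kc$ edges, we get $|V'|=c|V|$ and $|E'|\leq c^{2}|E|$, which is $\Oh(|E|)$ because $c$ is a constant. So feeding $G'$ to a 4-cycle counting algorithm yields $\#\xx(G')$ in time asymptotically the same as running the same algorithm on $G$.

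Next, I would establish the relation
\[
\#\xx(G) \;=\; \tfrac{1}{c}\bigl(\#\xx(G') - \#\text{bad cycles}(G')\bigr),
\]
based on the two observations already recorded: every honest 4-cycle $(a,b,c,d)$ in $G$, with each edge lifting to a unique matching in $G'$ given the cyclic offset construction, yields exactly $c$ 4-cycles in $G'$ (parametrized by the starting copy index), while any other 4-cycle in $G'$ must revisit two copies of the same node in $G$. I would verify this by tracing: once we fix the copy of $a$ used as the starting node, each subsequent edge forces the copy indices of $b$, $c$, $d$ via the offset $j$; the cycle closes back to $a$ only under cycling through $c$ choices of the initial index.

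Then I would compute $\#\text{bad cycles}(G')$ directly in $\Oh(|E|)$ time using the formula already derived. The key bookkeeping point is the classification of bad cycles by the number of distinct $G$-nodes whose copies they use (two or three), and for each case summing over pairs of copy indices $(i,j)$ and pairs $(v,w)$ of $G$-neighbors of the doubled node; the inner sum over $\{v,w\}$ pairs is reduced to a pair of single sums using the standard square-minus-sum-of-squares identity, keeping the cost linear in $|E'|=\Oh(|E|)$ since $c$ is a constant.

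The main subtle step, and the one I would double-check most carefully, is that the bad-cycle formula is exhaustive and non-overlapping: I must argue that no bad cycle uses four copies of two distinct $G$-nodes (which would be impossible on a 4-cycle since it would require a multi-edge between copies, precluded by the cyclic offset scheme), and that the factor $\tfrac{1}{2}$ on the two-node term correctly accounts for each such cycle being traversable from two opposite vertices. Once this case analysis is in place, the entire procedure uses exactly one black-box call to counting 4-cycles in a simple graph of size $\Oh(|G|)$, plus $\Oh(|E|)$ additional work, yielding the corollary.
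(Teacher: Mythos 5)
You reproduce the paper's construction and bookkeeping, and most of what you write is fine: the size bounds $|V'|=c|V|$, $|E'|\le c^2|E|$, the classification of bad cycles into those using copies of two versus three distinct $G$-nodes, and the square-minus-sum-of-squares trick for evaluating the bad-cycle sum in $\Oh(|E|)$ time all check out. However, you identify the wrong step as the delicate one. The bad-cycle accounting is not where the difficulty lies; the step that actually fails under the tracing you propose is the lifting claim itself, namely that a 4-cycle of $G$ counted with multiplicity $\prod_i \mult(e_i)$ yields exactly $c\prod_i \mult(e_i)$ good 4-cycles in $G'$.

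Carry out your trace. Fix a node cycle $(a,b,x,y)$ in $G$ with edge multiplicities $k_1,\dots,k_4$, start at $a^{(i)}$, and choose offsets $j_1\in[0,k_1)$, $j_2\in[0,k_2)$, $j_3\in[0,k_3)$ for the first three edges. This determines the copy of $y$ reached, namely $y^{((i\pm j_1\pm j_2\pm j_3)\bmod c)}$ (signs depending on the arbitrary orientation of each undirected edge in the construction), and the closing edge back to $a^{(i)}$ exists only when the total offset is $\equiv 0 \pmod c$ for some admissible $j_4\in[0,k_4)$. So the number of good 4-cycles of $G'$ over this node cycle is $c\cdot\bigl|\{(j_1,\dots,j_4): \sum_\ell \pm j_\ell\equiv 0\pmod c\}\bigr|$, which is in general strictly smaller than $c\,k_1k_2k_3k_4$: the cycle does \emph{not} close for every choice of parallel copies. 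Concretely, take $c=2$, the node cycle $a,b,x,y$ with $\mult(a,b)=2$ and all other multiplicities $1$. Then the multigraph count is $2$, but $G'$ has exactly two good 4-cycles ($a^{(0)}b^{(0)}x^{(0)}y^{(0)}$ and $a^{(1)}b^{(1)}x^{(1)}y^{(1)}$) and one bad 4-cycle ($a^{(0)}b^{(0)}a^{(1)}b^{(1)}$), so the formula returns $\tfrac12(3-1)=1\neq 2$. This gap is present in the construction as stated in the paper, so you did not introduce it, but a complete proof needs either a different multi-edge gadget (one in which the per-edge liftings compose to the identity permutation of copies for every choice of parallel copies) or an explicit correction term accounting for the offset-sum constraint; as written, the argument does not establish the corollary.
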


\subsubsection{Multiplicity Reduction.}\label{se:mult_reduction}

Recall that we consider arbitrary multigraphs, not necessarily bipartite.
First consider a simple graph and a coloring of its edges $K: E\rightarrow \{1,2,3,4,\perp\}$ in which every edge is colored with one of the four colors or not colored at all.
We define $f_K(a,b,c,d)$ as the number of 4-cycles consisting of exactly $a$ edges of the first color, $b$ edges of the second color etc. From now on the variables $a,b,c,d$ always satisfy $0\le a,b,c,d \le 4$ and $a+b+c+d=4$.

\begin{lemma}\label{le:exact_count_of_4_colored_graph}
 For every coloring $K$ of edges of a simple graph $G$, we can compute $f_K(a,b,c,d)$ for all quadruples $a,b,c,d$ in a constant number of black-box calls to counting 4-cycles in multigraphs of asymptotically the same size as $G$, but with multiplicities of edges bounded by a constant.
\end{lemma}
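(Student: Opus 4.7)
The plan is to interpret the values $f_K(a,b,c,d)$ as the unknown coefficients of a four-variable polynomial that arises from counting $4$-cycles in carefully weighted multigraphs derived from $G$, and then recover all of them by multivariate polynomial interpolation on a constant-size grid of integer points.

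For any tuple $\mathbf{m}=(m_1,m_2,m_3,m_4)$ of nonnegative integers I would build the multigraph $G_{\mathbf{m}}$ on the same vertex set as $G$ by giving every edge of color $i\in\{1,2,3,4\}$ multiplicity $m_i$ and discarding edges of color $\perp$. Recalling the convention that a $4$-cycle $(v_1,v_2,v_3,v_4)$ in a multigraph is counted with weight equal to the product of the multiplicities of its four edges, a $4$-cycle of $G$ that uses $a,b,c,d$ edges of colors $1,2,3,4$ respectively contributes exactly $m_1^{a}m_2^{b}m_3^{c}m_4^{d}$ to the count in $G_{\mathbf{m}}$, regardless of which particular edges play which roles. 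Summing over all $4$-cycles of $G$ gives
\begin{equation*}
 P(\mathbf{m}) \;=\; \sum_{a+b+c+d=4} f_K(a,b,c,d)\, m_1^{a} m_2^{b} m_3^{c} m_4^{d},
\end{equation*}
which is a polynomial of degree at most $4$ in each of its four variables.

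Next I would evaluate $P$ at the $5^{4}=625$ grid points $\mathbf{m}\in\{0,1,2,3,4\}^{4}$ by making $625$ black-box calls to the $4$-cycle counter on multigraphs. Each such call is on a graph with the same vertex set as $G$, at most $|E(G)|$ distinct edges, and every multiplicity bounded by the constant $4$, so the instances fit the hypotheses of the black box. Since the $5\times 5$ univariate Vandermonde matrix on the distinct nodes $\{0,1,2,3,4\}$ is invertible, so is its fourfold tensor product, and the $625$ evaluations therefore uniquely determine all $\binom{4+3}{3}=35$ coefficients $f_K(a,b,c,d)$ via a single $\Oh(1)$-size linear solve whose matrix does not depend on $G$.

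No real obstacle stands in the way. The only point worth stating explicitly is the factorisation of the weight contributed by a single $4$-cycle, which is immediate because $G$ is simple and therefore each edge of any $4$-cycle has a uniquely determined color class; no cancellation or repeated edge can spoil the monomial. Since the number of black-box calls, the bound on multiplicities, and the size blow-up are all absolute constants, the lemma follows.
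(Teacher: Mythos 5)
Your proof is correct and follows essentially the same strategy as the paper: both view the black-box counts on the reweighted multigraphs as evaluations of the polynomial $\sum f_K(a,b,c,d)\,x^a y^b z^c t^d$ at a constant number of points and recover the 35 coefficients by interpolation. The only (cosmetic) difference is the interpolation scheme --- you invert a tensor-product Vandermonde system on the grid $\{0,\ldots,4\}^4$, whereas the paper substitutes $(x,x^5,x^{25},x^{125})$ to reduce to a single univariate interpolation of degree 500; both are valid constant-size linear solves.
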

\begin{proof}
 Let $g_K(d_1,d_2,d_3,d_4)$ be the number of 4-cycles in the multigraph $G_K(d_1,d_2,d_3,d_4)$ constructed from the graph $G$ and the coloring $K$ in such a way that all edges with color $i$ have multiplicity $d_i$ and all other edges have multiplicity 0.
 Then all 4-cycles can be grouped by the number of edges of each color and the following holds:
 
 $$g_K(x,y,z,t)=\sum_{\substack{0\le a,b,c,d\le 4\\a+b+c+d=4}} f_K(a,b,c,d)x^ay^bz^ct^d$$
 
 For a specific coloring $K$, $g_K(x,y,z,t)$ is a polynomial in variables $x,y,z$ and $t$ with coefficient $f_K(a,b,c,d)$ by $x^ay^bz^ct^d$.
 We need to determine 35 unknown variables $f_K(a,b,c,d)$ for all quadruples $(a,b,c,d)$ such that $0\le a,b,c,d\leq 4$ and $a+b+c+d=4$.
 Observe that we can evaluate $g_K(x,y,z,t)$ for any values $x,y,z,t$ directly, without knowing its coefficients, using the procedure for multigraphs with bounded multiplicities.
 We would like to evaluate $g_K(x,y,z,t)$ in many points and then interpolate it.
 We will carefully choose the values of $x,y,z$ and $t$ so that we can interpolate $g_K(x,y,z,t)$ using the standard univariate polynomial interpolation.
 Consider the following polynomial:
 $$h_K(x):=g_K(x,x^5,x^{25},x^{125})=\sum_{\substack{0\le a,b,c,d\le 4\\a+b+c+d=4}} f_K(a,b,c,d)x^{a+5b+25c+125d}$$
 $h_K(x)$ is a univariate polynomial of constant degree 500 in which all the non-zero coefficients correspond to the value of $f_K(a,b,c,d)$ for some $a,b,c,d$. 
 Hence we can evaluate $h_K(x)$ for $x=1,2,\ldots,501$ and then interpolate.
 \end{proof}

Recall that now all the edge multiplicities are bounded by a parameter $U$.
A naive application of the above lemma would be to consider all quadruples of different multiplicities $(a_1,\ldots,a_4)$ such that $a_i \in [1,2,\ldots,U]$ and $a_i>a_{i+1}$ for all $i\in\{1,2,3\}$.
For each of them we color the edges of~$G$ in such a way that edges with multiplicity $a_i$ get color $i$ and all other edges get $\perp$ (no color).
For convenience, we denote such a coloring simply by a vector with individual multiplicities $[a_1,\ldots,a_4]$.
If the vector has less than four elements, then some colors are not used at all and by default, the last arguments of the function $f$ are set to 0.
For each of the considered colorings we add to the total number of 4-cycles the value of $a_1a_2a_3a_4\cdot f_{[a_1,a_2,a_3,a_4]}(1,1,1,1)$ as we can choose any of the $a_i$ copies of the edge with color $i$.
We also need to add terms corresponding to cycles in which some edges have the same color.
However, we need to be careful in order not to count more than once any 4-cycle with less than 4 colors. 
Algorithm~\ref{alg:naive_coloring} shows how to count every 4-cycle exactly once.
Clearly this approach is too slow for our purposes, because it uses $\Omega(U^4)$ black-box calls.

\begin{algorithm}[h]
\begin{algorithmic}[1]
  \State $C_4:=0$
  \For{$i=1,2,\ldots,U$}
    \State $C_4\pluseq i^4\cdot f_{[i]}(4)$
    \For{$j=1,2,\ldots,i-1$}
      \State $C_4\pluseq i^2j^2\cdot f_{[i,j]}(2,2)+i^3j\cdot f_{[i,j]}(3,1)+ij^3\cdot f_{[i,j]}(1,3)$
      \For{$k=1,2,\ldots,j-1$}
	\State $C_4\pluseq i^2jk\cdot f_{[i,j,k]}(2,1,1)+ij^2k\cdot f_{[i,j,k]}(1,2,1)+ijk^2\cdot f_{[i,j,k]}(1,1,2)$
	\For{$l=1,2,\ldots,k-1$}
	  \State $C_4\pluseq ijkl\cdot f_{[i,j,k,l]}(1,1,1,1)$
	\EndFor
      \EndFor
    \EndFor
  \EndFor
\end{algorithmic}
\caption{A naive way of coloring edges for counting 4-cycles.}
\label{alg:naive_coloring}
\end{algorithm}

In order to further reduce the complexity for $U\geq 2$, on a high level, we will iterate only over all powers of two that appear in the binary representation of each multiplicity.

\begin{lemma}
 We can count 4-cycles in a multigraph with edge multiplicities bounded by $U$ with $\Oh(\log^4 U)$ black-box calls to counting 4-cycles in colored simple graphs.
\end{lemma}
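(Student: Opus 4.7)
I would begin by writing each multiplicity in binary: $\mu(e) = \sum_{j=0}^{\lceil \log_2 U \rceil} b_j(e) 2^j$, and set $E_j = \{e : b_j(e) = 1\}$. The weighted count of 4-cycles in $G$ then expands as
\[
\sum_{C_4} \prod_{i=1}^4 \mu(e_i) \;=\; \sum_{(j_1, j_2, j_3, j_4)} 2^{\,j_1+j_2+j_3+j_4} \sum_{C_4} \prod_{i=1}^4 b_{j_i}(e_i),
\]
where the outer sum ranges over all ordered tuples of bit positions, of which there are $\Oh(\log^4 U)$. The inner sum, call it $N(j_1, j_2, j_3, j_4)$, counts 4-cycles in the underlying simple graph in which the $i$-th edge belongs to $E_{j_i}$. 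It is therefore enough to compute each $N$ using $\Oh(1)$ black-box calls to counting 4-cycles in a colored simple graph.

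For a fixed tuple $(j_1,j_2,j_3,j_4)$, I would build an auxiliary multigraph $H$ on vertex set $V$ by adding, for each edge $e$ of $G$ and each index $i$ with $b_{j_i}(e)=1$, a parallel copy of $e$ colored $i$. Since all parallel copies of a single edge sit between the same pair of vertices and at most four colors are used, $H$ has multiplicity at most $4$. Passing $H$ through the node-duplication construction of Corollary~\ref{cor:small_to_simple} produces a simple graph $H'$ in which every edge inherits the color of the multi-edge it represents, and invoking Lemma~\ref{le:exact_count_of_4_colored_graph} on this colored simple graph returns the full table of $f_{K'}$-values in $\Oh(1)$ black-box calls. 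Reading off the appropriate entry yields the number of 4-cycles of $H$ with the required color profile, and hence $N(j_1,j_2,j_3,j_4)$, after subtracting the contribution of the ``bad cycles'' through duplicated vertices, computable in linear time exactly as in Corollary~\ref{cor:small_to_simple}.

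The main obstacle I anticipate is handling tuples in which several $j_i$ coincide: in that case $H$ uses fewer than four distinct colors, and $N(j_1,j_2,j_3,j_4)$ must be recovered as a linear combination of several $f_{K'}$-entries whose coefficients depend on the partition type of $(j_1,j_2,j_3,j_4)$. This is a finite case analysis in the spirit of Algorithm~\ref{alg:naive_coloring}, but only over $\Oh(\log U)$ choices per coordinate instead of $\Oh(U)$. Summing $2^{j_1+j_2+j_3+j_4}\, N(j_1,j_2,j_3,j_4)$ over all $\Oh(\log^4 U)$ tuples (and dividing out the fixed cycle-orientation multiplicity) gives the total 4-cycle count in $G$, completing the proof with the claimed number of black-box calls.
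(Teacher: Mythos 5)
Your overall architecture differs from the paper's: the paper splits each edge into parallel copies weighted by distinct powers of two, enumerates the $\Oh(\log^4 U)$ multisets of four weights, and resolves the ``one edge carries several relevant bits'' problem by colouring each simple edge with the \emph{subset} of relevant bits it carries and enumerating subset-assignments with multinomial corrections; you instead expand the product of binary expansions over ordered tuples of bit positions and build, per tuple, a position-coloured multigraph. Your route is workable in outline, but the step that is supposed to close it is wrong as stated. The $(1,1,1,1)$-entry returned by Lemma~\ref{le:exact_count_of_4_colored_graph} applied to $H$ does not equal $N(j_1,j_2,j_3,j_4)$: since a single edge $e$ of $G$ may satisfy $b_{j_i}(e)=1$ for several indices $i$, it receives several parallel copies of different colours, and the entry counts each underlying 4-cycle once for every bijection $\sigma:\{1,2,3,4\}\to C$ with $b_{j_i}(\sigma(i))=1$ --- a quantity that varies from cycle to cycle --- rather than once. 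Your proposed remedy (a case analysis over coinciding $j_i$) misses the point, because this overcounting occurs even when the $j_i$ are pairwise distinct (e.g.\ when all four edges of a cycle have all four bits set, the cycle is counted $24$ times). The repair is a global symmetrization identity: interchanging sums shows that $\sum_{(j_1,\ldots,j_4)} 2^{j_1+\cdots+j_4} f_{H(j_1,\ldots,j_4)}(1,1,1,1)$ counts every pair (cycle, valid bijection) exactly once and therefore equals $4!$ times the weighted 4-cycle count, so one divides the grand total by $24$ rather than correcting each tuple separately. This identity is neither stated nor proved in your write-up, and without it the per-tuple bookkeeping does not close.

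A second, smaller gap is the passage from the coloured multigraph $H$ to a coloured simple graph. You invoke Corollary~\ref{cor:small_to_simple} as if it preserved colours and as if its bad-cycle correction could be read off per colour profile; neither follows from the corollary as a black box, and making the node-duplication colour-aware (in particular, decomposing the bad-cycle count by the colours of the two cherries) is real additional work that you assert rather than carry out. A cleaner composition is available: the interpolation argument in Lemma~\ref{le:exact_count_of_4_colored_graph} applies verbatim when the input coloured graph is itself a multigraph of bounded multiplicity (the intermediate multigraphs $G_K(d_1,\ldots,d_4)$ still have constant multiplicities), which lands you directly in the hypotheses of Corollary~\ref{cor:small_to_simple} without ever needing a colour-preserving node duplication. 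With the $1/4!$ identity and this reorganisation your argument goes through with the claimed $\Oh(\log^4 U)$ calls, but as written the proof has a genuine hole at its central counting step.
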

\newcommand{\QQ}{Q}
\begin{proof}
 Let $G$ be the input multigraph and $G'$ a graph in which we split every edge $e$ from $G$ into up to $\log U$ edges with multiplicities that are distinct powers of two and sum up to $\mult(e)$.
 Every edge multiplicity in $G'$ is a power of two, so to distinguish this from arbitrary multiplicities in $G$, we say that edges in $G'$ have weights.
 See Figure~\ref{fig:big_edge_mult}.
 \FIGURE{h}{.65}{big_edge_mult}{We split every edge $e$ into up to $\log U$ edges with weights that are powers of 2.}
 
 Consider a single 4-cycle $C$ in $G'$ consisting of edges with the multiset of edge weights $W=\{p_1,p_2,p_3,p_4\}$ where every $p_i$ is a power of two and $1\le p_1\le p_2 \le p_3 \le p_4 \le U$.
 Notice that $C$ corresponds to $\prod_i p_i$ 4-cycles in $G$.
 Now we would like to count all 4-cycles in $G'$ with the multiset of weights~$W$.
 Let $Q=\{q_1,\ldots,q_d\}$ be the set of all distinct weights in $W$.
 
 Suppose first that no pair of nodes in $G$ is connected with more than one edge with weight from~$Q$.
 Then we can color every edge of $G'$ with at most one of four colors and use Lemma~\ref{le:exact_count_of_4_colored_graph} to count in $G'$ all 4-cycles with edge weights $W$.
 However, we cannot use this approach when there is more than one edge between a pair of nodes in $G'$, because then some edges would be multi-colored, but Lemma~\ref{le:exact_count_of_4_colored_graph} allows only simple graphs.
 
 To overcome this difficulty, we need to divide all 4-cycles in $G'$ with the multiset of weights~$W$ into smaller groups and count each of them separately.
 For every edge in the cycle we will specify not only a weight $p_i$ but the whole set $M_i\ni p_i$
 equal to the intersection of $Q$ and the set of all weights of edges connecting the corresponding pair of nodes.
 In order to ensure that no two groups of cycles overlap, we require that if $p_i=p_{i+1}$ then $M_i \preceq M_{i+1}$ where $\preceq$ denotes lexicographic order.
 To sum up, in order to count all 4-cycles in $G'$ with the multiset of weights $W=\{p_1,p_2,p_3,p_4\}$ (recall that $p_i\leq p_{i+1}$) we divide them into groups by the choice of four sets $M_i$ that satisfy:
 \begin{itemize}
  \item for $i\in\{1,2,3,4\}: p_i\in M_i \subseteq Q$,
  \item for $i\in\{1,2,3\}:p_i=p_{i+1} \implies M_i \preceq M_{i+1}$.
 \end{itemize}
 Now we assign an edge $e$ the color $M_i$ if and only if $M_i=\text{BinaryRepresentation}(\mult(e)) \cap Q$.
 As we obtain a simple colored graph, we can count 4-cycles such that the multiset of colors of their edges is $\{M_1,M_2,M_3,M_4\}$ using Lemma~\ref{le:exact_count_of_4_colored_graph}.
 Let $\{N_1,N_2,\ldots,N_f\}$ be the set of all distinct colors in $\{M_1,M_2,M_3,M_4\}$.
 Next, because some of the colors $M_i$ might be equal (or equivalently, $f<4$) we need to multiply the obtained number of 4-cycles by:
 $$\prod_{M\in\{N_1,N_2,\ldots,N_f\}}\binom{x_{M}}{y_{M,1},y_{M,2},\ldots ,y_{M,d}}$$
 where the product might be over less than four elements if some of $M_i$ are equal.
 The multiplied expressions are multinomial coefficients, $x_M=|\{i:M_i=M\}|$ is the number of colors $M_i$ equal to $M$ and $y_{M,j}=|\{i: M_i=M \wedge p_i=q_j\}|$ is the number of weights $p_i=q_j$ that are assigned the color $M_i=M$. Recall that $Q=\{q_1,\ldots,q_d\}$ is the set of all distinct weights from $W$.
 
 To conclude, in order to count all 4-cycles with the multiset of weights $W$ we iterate over all choices of four subsets $M_i$ satisfying the above properties, construct an appropriate simple colored graph, count particular 4-cycles in it and multiply the obtained number by the multinomial coefficients.
 We sum these numbers up and multiply by $\prod_i p_i$, the number of cycles in $G$ that correspond to a cycle in $G'$ with the multiset of weights $W$.
 
 We need to repeat this approach for all choices of the multiset~$W$, so in total there will be at most $\log^4U\cdot16^4=\Oh(\log^4U)$ black-box calls to counting 4-cycles in simple colored graphs.
\end{proof}

\noindent
Combining the above lemma with Lemma~\ref{le:exact_count_of_4_colored_graph} and Corollary~\ref{cor:small_to_simple} we obtain:

\begin{corollary}\label{cor:c4_in_multigraphs}
 We can count 4-cycles in a multigraph $G$ with edge multiplicities bounded by $U$ in $\Oh(\log^4 U)$ black-box calls to counting 4-cycles in simple graphs of asymptotically the same size as $G$.
\end{corollary}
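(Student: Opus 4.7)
The corollary is an immediate composition of the two results that immediately precede it in the section, so the plan is simply to chain them together carefully. Given a multigraph $G$ with multiplicities bounded by $U$, I would first invoke the preceding lemma to reduce the task to $\Oh(\log^4 U)$ instances of counting 4-cycles in simple edge-colored graphs, each asking only for cycles whose edge-color multiset matches a prescribed pattern. Each such colored-counting instance is, in turn, reducible by Lemma~\ref{le:exact_count_of_4_colored_graph} to a constant number of black-box calls to counting 4-cycles in multigraphs of the same size whose multiplicities are bounded by an absolute constant (the constant arising from the degree-$500$ univariate interpolation in the proof of that lemma; the exact value is immaterial).

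At this stage every residual subproblem is a constant-multiplicity multigraph, so I would apply Corollary~\ref{cor:small_to_simple} to replace each such instance by a single black-box call to counting 4-cycles in a simple graph of asymptotically the same size. Multiplying the three layers gives a total of $\Oh(\log^4 U)\cdot \Oh(1)\cdot 1 = \Oh(\log^4 U)$ black-box calls to counting 4-cycles in simple graphs, as desired.

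The only subtlety is tracking the graph size along the chain. The preceding lemma splits each edge of $G$ into up to $\log U$ power-of-two pieces, Lemma~\ref{le:exact_count_of_4_colored_graph} preserves size, and Corollary~\ref{cor:small_to_simple} incurs only a constant-factor blow-up. Hence each simple graph handed to the 4-cycle counter has size $\Oh(|G|\log U)$, which the statement treats as ``asymptotically the same size as $G$'' — any polylogarithmic overhead is absorbed by the $\Oh(\log^4 U)$ call count when this corollary is plugged into the final running-time bound. There is no deeper obstacle here: the proof is pure bookkeeping, verifying that both the call counts and the graph sizes compose cleanly through the three-layer reduction.
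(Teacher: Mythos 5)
Your proof follows exactly the paper's route: the corollary appears immediately after the $\Oh(\log^4 U)$-call lemma with the one-line justification ``combining the above lemma with Corollary~\ref{cor:small_to_simple}'', which is precisely your three-layer composition. One small bookkeeping correction: each colored instance is built on the \emph{underlying simple graph} of $G$ (each adjacent pair of nodes receives a single color determined by the projection of its multiplicity's binary representation onto the set $Q$), so the graphs handed to the black box have $\Oh(|E(G)|)$ edges rather than $\Oh(|E(G)|\log U)$ --- the splitting into $\log U$ power-of-two pieces only serves to enumerate the weight multisets $W$, and the corollary's ``asymptotically the same size'' claim holds exactly.
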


\section{From Quartet Distance to Counting 4-Cycles}\label{se:from_qd_to_4c}

In the previous section we proved, that computing the quartet distance is at least as hard as counting 4-cycles.
Now we will show how to use state-of-the-art algorithm for counting 4-cycles to compute quartet distance faster.
As in Section~\ref{se:counting_4c_to_quartet_d}, we will count quartets of leaves related by the same topology in both trees.
Recall that there are two possible topologies: resolved quartet (butterfly) and unresolved quartet (star).
We will count shared resolved quartets using $\Oh(n\log n)$ algorithm of Brodal~et~al. \cite{BrodalFMPS13} (value $A$ computed in Section 7.3 there).

For counting shared unresolved quartets (stars), we develop a new algorithm which reduces the original question to counting 4-cycles in many different multigraphs.
To provide an intuition, we first describe a slow approach.
Every star has a central node, so we iterate over all central nodes $c_1\in T_1$ and $c_2\in T_2$.
Then we need to count quartets of leaves such that they are in different subtrees connected to $c_1$ and $c_2$. 
Observe that this corresponds to the number of matchings of size~4~($\sJ$) in the multigraph in which left (respectively right) nodes correspond to subtrees connected to the node $c_1$ (respectively $c_2$), and the multiplicity of an edge $(a,b)$ is the number of common leaves in the $a$-th subtree connected to $c_1$ and the $b$-th subtree connected to $c_2$.
See Figure~\ref{fig:star_to_m4}.

\FIGURE{b}{.53}{star_to_m4}{Construction of a bipartite multigraph for two central nodes $c_1\in T_1$ and $c_2\in T_2$.}

Similarly as in Lemma~\ref{le:m4}, we can count matchings of size~$4$ by counting 4-cycles (the proof follows roughly the same idea
but requires more calculations, and can be found in Appendix~\ref{se:missing_proofs}):

\begin{theorem}\label{thm:counting_multi}
In any bipartite multigraph $G$ we have $\cc\sJ=\tc{\sJ} + \xx$, where $\tc{\sJ}$ can be computed from $G$ in $\Oh(|E|)$ time.
\end{theorem}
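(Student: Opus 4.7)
The plan is to mirror the structure of Lemma~\ref{le:m4}, generalizing each of its ingredients from simple graphs to bipartite multigraphs. Throughout, I would treat the $\mult(u,v)$ parallel copies of each edge as distinct edges, so that $\cc{R}$ counts unordered $4$-subsets of distinct edges whose underlying endpoint pattern matches shape~$R$, and the degree $d(u)$ is defined as $\sum_{v}\mult(u,v)$; $\xx$ still counts node-sequences weighted by the product of the multiplicities of the four traversed edges, following the paper's convention.

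The first step is to reprove the analogues of Lemma~\ref{le:calculations_easy} in the multigraph setting. The simple-graph formulas, such as $\cc\sA = \sum_u \binom{d(u)}{4}$, remain the natural starting points, but they must be corrected for the fact that two of the four selected distinct edges may now be parallel copies of the same pair. Each such collision collapses one shape into another (for instance, two parallel edges inside an $\sA$-configuration leave only three distinct right endpoints), and the resulting over-counts can be written as sums over edges involving factors of the form $\binom{\mult(u,v)}{2}$ and $\mult(u,v)\cdot d(u)$. All such corrections are therefore computable in $\Oh(|E|)$ time.

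The second step is to reprove the analogues of Lemma~\ref{le:calculations_hard}. I expect the coefficient of $\xx$ in each identity (namely $-2$ for $\sE$ and $\sI$, $+1$ for $\sF$, $+4$ for $\sH$) to be preserved, because it reflects purely the number of ways an ordered $4$-cycle is mis-classified as the corresponding shape when one fixes which edges play which role, and this combinatorial ratio does not depend on the presence of parallel copies once distinct copies are distinguished. The remaining multigraph-specific discrepancies are absorbed into $\tc{R}$ as additional correction sums over edges, again computable in $\Oh(|E|)$ time.

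Finally, I would extend the derivation of $\cc\equiv$ and $\cc\sJ$. Using the multigraph version of the identity $\cc\sJ = \tfrac14\bigl((m-3)\cc\equiv - \cc\sH - 2\cc\sI - 2\cc{\mirror\sI}\bigr)$ with $m=\sum_e \mult(e)$, and substituting the multigraph analogues just established, the $\xx$-coefficients should recombine as $\tfrac14(-4\xx+4\xx+4\xx)=\xx$ exactly as in the simple-graph proof, yielding $\cc\sJ = \tc\sJ + \xx$. The main obstacle is the bookkeeping in the first two steps: each shape admits several collapsed realizations in a multigraph, and enumerating and subtracting these overlaps without double-counting is tedious, but because each correction is expressible as a sum indexed by edges (or pairs of endpoints sharing an edge), no step exceeds $\Oh(|E|)$.
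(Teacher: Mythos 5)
Your overall strategy is the paper's: generalize the shape-counting identities of Lemmas~\ref{le:calculations_easy}, \ref{le:calculations_hard} and~\ref{le:m4} to multigraphs, observe that the coefficients of $\xx$ survive unchanged, and absorb all parallel-edge degeneracies into $\Oh(|E|)$-computable correction terms. Your prediction about the $\xx$-coefficients is confirmed by the paper's calculations. The paper's bookkeeping device differs from yours in a way worth noting: rather than starting from $\binom{d(u)}{4}$ with $d(u)=\sum_v\mult(u,v)$ and subtracting collision terms, it works throughout with the quantities $\multi{S}{k}$ (the number of ways to pick $k$ copies of $k$ \emph{distinct} underlying edges of $S$, i.e.\ the degree-$k$ elementary symmetric polynomial of the multiplicities), computed by dynamic programming together with a small lemma for passing to $\multi{A\setminus B}{k}$. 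This makes the ``no two selected copies are parallel'' constraint automatic in every formula and avoids a separate inclusion–exclusion over collision patterns.

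The one place where you commit to a concrete formula is, however, incorrect. The identity $4\cc\sJ=(m-3)\cc\equiv-\cc\sH-2\cc\sI-2\cc{\mirror\sI}$ does not carry over with $m=\sum_e\mult(e)$: the $-3\cc\equiv$ term accounts for the fourth edge being one of the three matching edges, and in a multigraph the fourth edge-copy may instead be a \emph{different parallel copy} of a matching edge, so this overcount is multiplicity-dependent rather than exactly $3$ per $3$-matching. The paper replaces $-3\cc\equiv$ by a new degenerate shape $\cc\sOII$ (an ordered, not necessarily distinct, pair of copies of a single underlying edge together with a disjoint $2$-matching), and analogous degenerate shapes ($\zoz$, $\sOVZ$, $\sOV$) enter the earlier identities for $\cc\sG$ and $\cc\equiv$. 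The theorem itself survives your misstep, since the discrepancy $\cc\sOII-3\cc\equiv$ is $\xx$-free and computable in $\Oh(|E|)$ time, but the derivation as written would yield a wrong $\tc\sJ$. The general lesson is that while the $\xx$-coefficients are preserved, the scalar coefficients of the simple-graph identities are not; each identity must be re-derived with the parallel-copy configurations enumerated as explicit new shapes.
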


Hence we reduced computing quartet distance to $\Oh(n^2)$ black-box calls to counting 4-cycles in a possibly large multigraph, which, from Corollary~\ref{cor:c4_in_multigraphs}, can be done with $\Oh(\log^4U)$ black-box calls to the procedure counting 4-cycles in a simple graph.
To obtain a faster algorithm for computing quartet distance we need to decrease both the number of black-box calls and have some control on the total size of the constructed multigraphs.
In Section~\ref{se:faster_alg_for_qd} we design a divide and conquer approach based on top tree decomposition that, combined with the state-of-the-art algorithm for counting 4-cycles, allows us to improve the state-of-the-art algorithm for quartet distance:

\begin{theorem}\label{thm:quartet_final}
An algorithm for counting 4-cycles in a simple graph with $m$ edges in $\Oh(m^{\pow})$ time
implies an algorithm for computing the quartet distance between trees on $n$ leaves in $\Ohtilda(n^\pow)$ time.
\end{theorem}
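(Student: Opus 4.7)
The starting point is to split the shared quartets into resolved ones (butterflies) and unresolved ones (stars). The resolved shared quartets are counted in $\Oh(n\log n)$ time by the algorithm of Brodal et al.~\cite{BrodalFMPS13}, so the task reduces to counting shared stars. The plan is to express this count as a sum of matching counts in a collection of bipartite multigraphs whose individual and total sizes are both carefully controlled, and then to use Theorem~\ref{thm:counting_multi} together with Corollary~\ref{cor:c4_in_multigraphs} to replace each matching count by a near-linear overhead plus counting 4-cycles in simple graphs.

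For a fixed pair of candidate central nodes $c_1\in T_1$ and $c_2\in T_2$, the shared stars with these centers are exactly the size-$4$ matchings in the bipartite multigraph $H(c_1,c_2)$ whose left (right) vertices are the subtrees hanging off $c_1$ (resp.~$c_2$) and whose edge multiplicities count common leaves, as in Figure~\ref{fig:star_to_m4}. By Theorem~\ref{thm:counting_multi}, the matching count is extractable in linear time from the number of 4-cycles in $H(c_1,c_2)$, and by Corollary~\ref{cor:c4_in_multigraphs} that number reduces, at a polylogarithmic cost, to counting 4-cycles in simple graphs of asymptotically the same size. Thus, if we exhibit a collection of pairs $(c_i^1,c_i^2)$ that accounts for every shared star exactly once, where $H(c_i^1,c_i^2)$ has $m_i$ edges, and if we ensure $\sum_i m_i = \Ohtilda(n)$, then running the hypothesised $\Oh(m^{\pow})$ algorithm on each instance and summing yields
\[
\sum_i m_i^{\pow} \le \left(\sum_i m_i\right)\cdot \left(\max_i m_i\right)^{\pow-1} = \Ohtilda(n)\cdot n^{\pow-1} = \Ohtilda(n^{\pow}),
\]
which is exactly the claimed bound.

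The naive enumeration of $\Theta(n^2)$ pairs $(c_1,c_2)$ violates the size budget immediately, so I would apply the top tree decomposition to $T_1$ and $T_2$ simultaneously, in the spirit of the overview. Each top tree has $\Oh(n\log n)$ clusters, and a careful charging over pairs of clusters enumerates only $\Oh(n\log^{2} n)$ pairs $(c_1,c_2)$ explicitly while guaranteeing that each leaf participates in only polylogarithmically many of the associated multigraphs, so that $\sum_i m_i = \Ohtilda(n)$ as required. The pairs that are not produced explicitly have a restricted structure, essentially parametrised by a single root-to-leaf path in one decomposition, and they can be handled aggregately in batches using the standard heavy-light decomposition together with suitable path queries on the two trees, reusing the high-level mechanism developed by Brodal et al.~\cite{BrodalFMPS13}, all in total time $\Ohtilda(n)$.

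The main obstacle is this simultaneous cardinality-and-size control in the top tree step: bounding the number of explicit pairs by $\Ohtilda(n)$ is not sufficient, one must additionally argue that each subtree (and therefore each edge contributed to any $H(c_i^1,c_i^2)$) is charged to only polylogarithmically many instances, so that the total edge count stays near-linear even though individual multigraphs may be large. Once this invariant is in place, plugging the assumed $\Oh(m^{\pow})$ procedure into every call, adding the $\Ohtilda(n)$ cost for the aggregated remaining pairs, and combining with the $\Oh(n\log n)$ bound for resolved quartets yields the announced $\Ohtilda(n^{\pow})$ algorithm.
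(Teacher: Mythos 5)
Your plan coincides with the paper's at every step where you give details: split into resolved and unresolved shared quartets, count the resolved ones with Brodal et al., reduce shared stars with fixed centers to size-4 matchings in a bipartite multigraph, pass through Theorem~\ref{thm:counting_multi} and Corollary~\ref{cor:c4_in_multigraphs}, and control the total instance size via a top tree decomposition of both trees plus the convexity bound $\sum_i m_i^{\pow}=\Ohtilda(n^{\pow})$. However, you explicitly defer the one step that carries the whole proof --- ``this simultaneous cardinality-and-size control'' --- and that deferral is a genuine gap, not a routine detail. The multigraph $H(c_1,c_2)$ you describe has one vertex per subtree hanging off $c_1$ or $c_2$ and one edge per nonempty intersection, so its size is governed by $\deg(c_1)+\deg(c_2)$, which can be $\Theta(n)$; summing over the $\Oh(n\log^2 n)$ cluster pairs gives nothing close to $\sum_i m_i=\Ohtilda(n)$. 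The paper's resolution is not merely a charging argument over cluster pairs: it (a) assigns each star to a \emph{representative} pair of clusters via the representative cluster $R_i(c_i)$ of its center, splitting stars into types I and II, where type II stars (whose leaves avoid the representative clusters) are counted entirely by orthogonal range counting and extended LCA queries with no 4-cycle instance at all; and (b) for type I, replaces $H(c_1,c_2)$ by a compressed multigraph $\BM'$ whose explicit vertices are only the $\Oh(|\LL|)$ subtrees containing a common leaf of the two clusters plus $\Oh(1)$ ``outside'' vertices, with all remaining subtrees collapsed into a single implicit vertex whose edge multiplicities are recovered by orthogonal queries. This compression is what makes $m_i\le|\LL_i|+\Oh(1)$ and hence $\sum_i m_i=\Ohtilda(n)$ true.

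The compression in turn creates an error term you do not anticipate: stars with two leaves in subtrees of two \emph{different} implicit neighbors of a center are no longer matchings in $\BM'$ (those subtrees have been merged into one vertex). The paper counts these ``missing stars'' separately, via a quantity $\sum_i\alpha_i\beta_i$ over the implicit neighbors, maintained offline during a DFS of the top tree with a heavy-light-decomposition data structure supporting $\Oh(\log n)$-time mark and count operations. Your mention of heavy-light decomposition for ``the pairs not produced explicitly'' gestures at the wrong target (those are the type II stars, handled by range counting); the heavy-light machinery is needed precisely for this missing-stars correction inside the type I count. Without the representative-pair classification, the compressed multigraph, and the missing-stars correction, the size bound your complexity calculation rests on is unsupported, so the proposal as written does not constitute a proof.
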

\noindent
Now we can plug in the algorithm of Vassilevska Williams et al. \cite{WilliamsWWY15} for counting 4-cycles in $\Oh(m^{2-\frac{3}{2\omega+1}})=\Oh(m^\gamma)$ time.
Because $\omega<2.373$ \cite{Gall14a,Williams12}, we have $\pow<1.48$ and obtain an algorithm for computing the quartet distance between two trees on $n$ leaves in $\Oh(n^{1.48})$ time.
Furthermore, we can also use the $\Oh(n^\omega)$ algorithm by Alon et al. \cite{AlonYZ97} for counting 4-cycles whenever the graph
is dense. As described in more detail at the end of the next section, by appropriately switching between both approaches we obtain
an algorithm computing the quartet distance in $\tilde\Oh(\min(n^{1.48},nd^{0.77}))$ time, thus showing Theorem~\ref{thm:new_alg_for_qd}.

\section{Faster Algorithm for Quartet Distance}\label{se:faster_alg_for_qd}

In this section we describe a faster algorithm for computing the quartet distance.
The starting point is that, as observed in Section~\ref{se:from_qd_to_4c}, we only need to count quartets of leaves that induce stars in both trees, which we call shared stars.
We group the stars by their central nodes and then count quartets using the procedure for counting 4-cycles applied to many small bipartite multigraphs.
However, this approach is too slow, because there are $\Theta(n^2)$ pairs of central nodes to consider.
To bypass this difficulty, we will consider some of the pairs of central nodes explicitly (there will be $\Oh(n\log^2 n)$ of such explicitly considered
pairs) and then process the remaining ones aggregately in bigger groups.
\paragraph{Top trees.} We root both trees at arbitrarily chosen leaves and then apply a hierarchical decomposition based on top trees introduced by Alstrup et al. \cite{AlstrupHLT05} and then extended by Bille et al.~\cite{BilleGLW15}.
A top tree of a tree $T$ is an ordered and labeled binary tree describing a hierarchical decomposition
of~$T$ into clusters where each cluster is either a single edge, or obtained by merging two clusters.
Each cluster has at most two boundary nodes and there are five possible ways of merging two clusters, see Figure~\ref{fig:toptrees} (Figure 2 in \cite{BilleGLW15}).
Let the merged boundary node of a cluster $C$ be the common node of two clusters that form $C$.
There are $\Oh(n)$ clusters describing each of the trees, as we start with $n-1$ clusters for each edge and every merge decreases the number of clusters by one.
As we root the trees at leaves, the final cluster will have at most two boundary nodes which are leaves of the original tree.
Let $\TT_i$ be the top tree representing tree $T_i$.
\begin{property}[Corollary 1 from \cite{BilleGLW15}]\label{prop:log_height_of_toptree}
Given a top tree on $n$ nodes, we can create its top tree of height $\Oh(\log n)$ in $\Oh(n)$ time.
\end{property}

\FIGURE{h}{.7}{toptrees}{All 5 types of merges in a top tree decomposition. Full circles denote boundary nodes of the resulting cluster and empty circle denotes the boundary node of the merged clusters which is not boundary for the resulting cluster.}

\noindent
Let a relevant pair of clusters $(C_1,C_2)$ be a pair of clusters $C_1\in\TT_1$ and $C_2\in\TT_2$ that have at least one common leaf.
From Property~\ref{prop:log_height_of_toptree} and other properties of top tree decomposition, holds the following fact:
\begin{fact}\label{fact:toptree_properties}
 The following properties hold:
 \begin{enumerate}[label=(\roman*)]
  \item Every leaf in $T_i$ is in $\Oh(\log n)$ distinct clusters $C_i$.
  \item There are $\Oh(n\log^2n)$ relevant pairs of clusters.
  \item For every cluster with two boundary nodes, one of the boundary nodes is an ancestor of the other.
 \end{enumerate}
\end{fact}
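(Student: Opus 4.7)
My plan is to prove the three items in the stated order, relying on Property~\ref{prop:log_height_of_toptree} for (i) and (ii), and on a structural induction along the merge hierarchy of $\TT_i$ for (iii).

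For (i), I would first observe that the clusters of $\TT_i$ containing a fixed leaf $\ell$ of $T_i$ form an ancestor chain in $\TT_i$. Since $\ell$ is incident to exactly one edge of $T_i$, exactly one base cluster of $\TT_i$, namely that edge, contains $\ell$; and since every non-base cluster is obtained by merging two node-disjoint-except-at-boundary-nodes smaller clusters, a cluster of $\TT_i$ contains $\ell$ if and only if it lies on the path in $\TT_i$ from this base cluster up to the root. The number of clusters containing $\ell$ is therefore exactly the depth of this base cluster in $\TT_i$, which by Property~\ref{prop:log_height_of_toptree} is $\Oh(\log n)$.

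For (ii), I would bound the number of relevant pairs by a simple double-counting argument. Every relevant pair $(C_1,C_2)$ has at least one common leaf $\ell$, so we may charge it to any such triple $(\ell, C_1, C_2)$ with $\ell \in C_1$ and $\ell \in C_2$. By item (i) applied to $\TT_1$ and to $\TT_2$, each of the $n$ leaves participates in $\Oh(\log n)$ clusters of $\TT_1$ and $\Oh(\log n)$ clusters of $\TT_2$, and therefore contributes $\Oh(\log^2 n)$ such triples. Summing over leaves gives $\Oh(n\log^2 n)$ triples, which upper bounds the number of distinct relevant pairs.

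For (iii), I would proceed by structural induction along the merge hierarchy underlying $\TT_i$, inspecting the five merge patterns in Figure~\ref{fig:toptrees}. In the base case a cluster is a single edge $(u,v)$ of $T_i$; its boundary nodes are $u$ and $v$, and since $T_i$ is rooted one endpoint is the parent of the other, so the ancestor relation holds trivially. For the inductive step, a cluster $C$ is built by merging two child clusters $C',C''$ sharing a common boundary node $m$ (the merged boundary node), and the boundary nodes of $C$ are chosen from among the boundary nodes of $C'$ and $C''$ other than $m$. Assuming inductively that both $C'$ and $C''$ have boundary nodes on a single root-to-leaf path of $T_i$, inspection of each of the five merge diagrams shows that any two boundary nodes selected for $C$ also lie on a single root-to-leaf path through $m$, so one is an ancestor of the other. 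The only mildly delicate step is this case analysis for (iii); everything else is direct bookkeeping once the logarithmic height of $\TT_i$ is in hand.
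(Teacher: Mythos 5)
Your argument is correct and matches what the paper intends: the paper states this Fact without a written proof, deriving it from Property~\ref{prop:log_height_of_toptree} and standard top-tree properties, and your chain-of-ancestors argument for (i), the double counting for (ii), and the induction over the five merge types for (iii) are exactly the right way to fill that in. One small point on (iii): the hypothesis ``one boundary node is an ancestor of the other'' for each child cluster is not by itself enough to close a vertical merge, since comparability of $\{a,m\}$ and of $\{m,b\}$ does not imply comparability of $\{a,b\}$ when $m$ could be the ancestor in both pairs; you should strengthen the induction hypothesis to record \emph{which} boundary node is the ancestor (the top one) and observe that the merges in Figure~\ref{fig:toptrees} preserve this orientation, after which the step goes through.
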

\noindent
Recall that our aim is to count all shared stars in $T_1$ and $T_2$.
Our algorithm will process each relevant pair of clusters and count some particular stars for each such pair.
Now, for every shared star we define which relevant pair of clusters does it contribute to.

\paragraph{Representatives.}
Consider a non-leaf node $u\in T_i$.
We define $R_i(u)$, the representative cluster of $u$ in $T_i$, as the smallest cluster that contains $u$ in which $u$ is not a boundary node.
Note that $R_i(u)$ is the lowest common ancestor (in the top tree representing $T_i$) of all clusters that have $u$ as a boundary node.
Furthermore, this cluster must be formed by either (a) or (b)-type merge (all types are in Figure~\ref{fig:toptrees}) and the node $u$ is then the empty circle in the merge.
Consider a shared star $s$ on leaves $L=\{a,b,c,d\}$ and central nodes $c_1\in T_1$ and $c_2\in T_2$.
Let $R_1(c_1)$ and $R_2(c_2)$ be the representative clusters of $c_1$ and $c_2$.
We slightly abuse the notation and write $R_i(s):=R_i(c_i)$ identifying a star with its central node in the corresponding tree $T_i$.
As $s$ is a shared star then subtree of $T_i$ induced by the leaves from $L$ is a star in both trees $T_i$. 
Hence each of the clusters $R_i(s)$ contains at least $2$ leaves from $L$.
We say that the star $s$ is of type I if $R_1(s)$ and $R_2(s)$ have at least one common leaf with $L$, otherwise is of type II.
See Figure~\ref{fig:star_types} for an example.
We intentionally draw the clusters as if the trees were unrooted and do not specify the type of merge, because the relation between the clusters (up/down, left/right) is irrelevant in this case.
Notice that the configuration described in the bottom row of Figure~\ref{fig:star_types} is the only possible for a star of type II. 
More precisely, in a star $s$ of type II each of $R_i(s)$ contains exactly two leaves of $s$ and none of them is both in $R_1(s)$ and $R_2(s)$.

\FIGURE{h}{1}{star_types}{An example of a star of type I induced by leaves $a,b,c,d$ and a star of type II induced by leaves $x,y,z,t$.
For each star we mark the representative clusters with dotted lines and the central node with an empty circle.}

\begin{fact}
 If there exists an $i\in\{1,2\}$ such that $R_i(s)$ contains at least $3$ leaves of a shared star $s$, then the star $s$ is of type I.
\end{fact}

We define that stars $s$ of type I contribute to the pair of clusters $(R_1(s),R_2(s))$ and will be counted while considering this pair.
By definition, in this case $(R_1(s),R_2(s))$ is a relevant pair of clusters and hence will be considered explicitly by our algorithm.
However, for stars $s$ of type II, $(R_1(s),R_2(s))$ is not necessarily a relevant pair of clusters.
Let $R_i'(s)$ be the smallest cluster of $T_i$ containing at least 3 leaves of $s$.

\begin{lemma}\label{le:type_2_repr}
 For every star $s$ of type II, $R_i'(s)$ is uniquely defined and contains exactly 3 leaves common with $s$.
\end{lemma}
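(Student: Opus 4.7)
My plan is to first pin down the structure of $R_i(s)$ relative to the four leaves $L$ of the star, then exploit this structure to get both uniqueness of $R_i'(s)$ and the ``exactly three'' conclusion.

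First, I want to show that $|L \cap R_i(s)| = 2$ for each $i \in \{1,2\}$. Since $c_i$ is a non-boundary node of $R_i(s)$, every edge incident to $c_i$ lies in $R_i(s)$, so all four subtrees of $T_i$ hanging at $c_i$ that contain $L$ start inside $R_i(s)$. Because $R_i(s)$ has at most two boundary nodes, at most two of these subtrees can exit $R_i(s)$, and hence at least $4 - 2 = 2$ leaves of $L$ lie inside $R_i(s)$. Combined with the type II condition that no leaf of $L$ sits in both $R_1(s)$ and $R_2(s)$, inclusion-exclusion forces equality: $|L \cap R_i(s)| = 2$ for each $i$, and the two ``outside'' leaves of $L$ sit in the two distinct connected components of $T_i$ attached to $R_i(s)$ at its two boundary nodes.

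For uniqueness I use the standard top-tree dichotomy that any two clusters are either nested or edge-disjoint. Two edge-disjoint clusters share no leaf, because a leaf has a unique incident edge which lies in at most one cluster. Hence any two clusters each containing at least three leaves of $L$ must be nested (otherwise together they would witness at least six leaves in $L$), and the minimal such cluster is uniquely determined---this is~$R_i'(s)$.

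For the ``exactly three'' part, I first observe that both $R_i(s)$ and $R_i'(s)$ contain $c_i$, since any two leaves of the star are joined in $T_i$ through $c_i$, and so any cluster containing at least two of them contains $c_i$. Thus the two clusters are nested, and since $|L \cap R_i(s)| = 2 < 3 \le |L \cap R_i'(s)|$, we have $R_i(s) \subsetneq R_i'(s)$. I then walk from $R_i(s)$ upward along its ancestor chain in $\TT_i$: each step merges the current cluster with a sibling along one of its boundary nodes, and this sibling lies entirely in one connected component of the complement of the current cluster in $T_i$. By the first step, each such component contains at most one of the two outside leaves of $L$, so each merge absorbs at most one new leaf of $L$. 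Starting from the value $2$ at $R_i(s)$, the first ancestor with at least $3$ leaves of $L$ must therefore contain exactly $3$, and this ancestor is $R_i'(s)$.

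The main subtlety I anticipate is justifying the ``at most one new leaf per merge'' statement as the walk proceeds up $\TT_i$, since the outside components shrink and their boundary nodes change. The key observation that controls this is that a sibling cluster is always entirely contained in one connected component of the complement of the current cluster, so it cannot absorb outside leaves of $L$ from two different components at once; since the two outside leaves already live in two distinct such components at the level of $R_i(s)$, the count increases by at most one at every step.
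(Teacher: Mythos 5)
Your proof is correct and follows the same route as the paper's: establish that exactly two leaves of the star lie outside $R_i(s)$, one on each side of its two boundary nodes, and then observe that each merge on the way up the top tree extends the cluster past only one boundary node and hence absorbs at most one outside leaf. You fill in details the paper leaves implicit (the counting argument showing $|L\cap R_i(s)|=2$ from the type II condition, and the laminarity argument for uniqueness of $R_i'(s)$), but the underlying idea is identical.
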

\begin{proof}
 Consider the cluster $R_i(s)$.
 As $s$ is of type II, two leaves of $s$ are outside of $R_i(s)$, at two opposite sides.
 Observe that every time we merge $R_i(s)$ with another cluster, we extend it from the side of only one boundary node.
 Hence we cannot simultaneously add both outside leaves to the cluster.
\end{proof}

We define that stars $s$ of type II contribute to the pair of clusters $(R_1'(s),R_2'(s))$. 
From Lemma~\ref{le:type_2_repr} it follows that $(R_1'(s),R_2'(s))$ is a relevant pair of clusters and hence will be considered explicitly by our algorithm.
Notice that, in this situation, the pair of clusters looks exactly as in Figure~\ref{fig:type_2_repr}.
To describe this in more detail, consider the subtree of $T_1$ induced by $s$ on the left side of Figure~\ref{fig:type_2_repr} and the names of nodes there.
One of the merged clusters (here left, $X_1$) contains the third, just added leaf ($x$) and the other (right, $Y_1$) has two leaves ($y,z$) connected to a single node on the path between two boundary nodes of the cluster.
In this situation, the type of merge (recall Figure~\ref{fig:toptrees}) can be arbitrary, not necessarily only (a) or (b) as for type I.

\begin{observation}
 For every star $s$ of type II, its central node is neither a boundary nor the merged boundary node of~$R'(s)$.
\end{observation}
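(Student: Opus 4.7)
My plan is to argue in three short steps, using only the minimality in the definition of $R_i(s)$ together with the basic characterization: a node of a cluster is internal (not on the boundary) if and only if all of its $T_i$-neighbors lie inside the cluster. Let $c$ denote the central node $c_i$ of $s$ in $T_i$.

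First I would show that $c$ is in fact the merged boundary node of $R_i(s)$ itself. Writing $R_i(s) = \mathrm{merge}(A,B)$ with shared node $p$, suppose for contradiction that $c \neq p$; then $c$ belongs to exactly one side, say $c \in A$ and $c \notin B$. Minimality of $R_i(s)$ forces $c$ to be a boundary node of the strictly smaller cluster $A$, so $c$ has a $T_i$-neighbor $w$ outside $A$. This $w$ cannot lie in $B$ either: together with the path from $w$ to $p$ in $B$ and the path from $p$ to $c$ in $A$, the edge $(c,w)$ would close a cycle in the tree $T_i$. Hence $w \notin A \cup B = R_i(s)$, so $c$ is also a boundary of $R_i(s)$, contradicting the definition of $R_i(s)$.

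Next I would observe that internality propagates upward: because every $T_i$-neighbor of $c$ already lies in $R_i(s)$, for any cluster $C$ with $R_i(s) \subseteq C$ every neighbor of $c$ also lies in $C$, so $c$ is internal to $C$. Taking $C = R_i'(s)$ immediately shows that $c$ is not a boundary node of $R_i'(s)$.

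Finally, to rule out that $c$ is the merged boundary node of $R_i'(s)$, I would write $R_i'(s) = \mathrm{merge}(A',B')$ with $R_i(s) \subseteq A'$; this inclusion is automatic since $R_i'(s)$ is obtained by extending $R_i(s)$ outward through one of its boundaries in the top tree. The merged boundary node of $R_i'(s)$ is by definition a boundary node of $A'$, but the previous step applied to $C = A'$ shows that $c$ is internal to $A'$, ruling it out. The only genuinely delicate moment is the acyclicity argument in the first step, which depends on $T_i$ being a tree; the rest is uniform across the five merge types in Figure~\ref{fig:toptrees} and requires no case analysis.
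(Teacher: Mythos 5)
The paper offers no proof of this observation, and your overall route---$c$ is internal to $R_i(s)$, internality propagates to any supercluster, and the merged boundary node of $R_i'(s)$ is a boundary node of the child $A'$ containing $R_i(s)$---is the intended one. The genuine gap is your claim that the containment $R_i(s)\subseteq R_i'(s)$ is ``automatic.'' It is not, and it is precisely the point where the hypothesis that $s$ is of type~II must enter; your proof as written never uses that hypothesis. For a general star, $R_i'(s)$ (the smallest cluster containing at least three leaves of $s$) can be a \emph{proper descendant} of $R_i(s)$ in the top tree---this happens whenever $R_i(s)$ itself already contains three or four leaves of $s$---and then $c$ \emph{is} a boundary node of $R_i'(s)$, so the observation fails. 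What saves the type~II case is Lemma~\ref{le:type_2_repr} (really its proof): two leaves of $s$ lie outside $R_i(s)$ on opposite sides, so $R_i(s)$ contains at most two leaves of $s$. Since $R_i'(s)$ contains three leaves of $s$ lying in three different components of $T_i-c$, connectivity forces $c\in R_i'(s)$; and since two top-tree clusters are either nested or edge-disjoint (in which case any shared node is a boundary node of both), the fact that $c$ is not a boundary node of $R_i(s)$ forces $R_i(s)\subsetneq R_i'(s)$. You need to spell this out.

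Two smaller points. First, your Step~1 proves more than you need and its conclusion ($c=p$) is never used afterwards: that $c$ is not a boundary node of $R_i(s)$ is the \emph{definition} of the representative cluster, and Steps~2--3 rely only on that. Second, within Step~1 the inference ``$c$ is a boundary node of $A$, hence has a $T_i$-neighbor outside $A$'' is the converse of the invariant you state at the outset and is not valid for externally designated boundary nodes (such as the leaves at which the trees are rooted); it is harmless here only because $c$ is an internal node of $T_i$, but as written the implication points the wrong way.
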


\FIGURE{h}{.9}{type_2_repr}{Clusters $R_1'(s)$ and $R_2'(s)$ for a star of type II connecting nodes $x,y,z,t$.
These are the smallest clusters containing three leaves from the star.
The central node of the star is neither a boundary nor the merged boundary node.
The type of merge can be arbitrary.}

\subsection{Counting Stars of Type II}\label{se:stars_type_II}

In this section we describe how to count stars of type II.
From Lemma~\ref{le:type_2_repr} it follows that it is enough to perform the calculations only for all relevant pairs of clusters and count stars of type II contributing to them.

First, we list all relevant pairs of clusters and store their common leaves.
This can be done by iterating over all leaves $\ell$ and then over all clusters containing $\ell$ in $\TT_1$ and then in $\TT_2$.
Now we consider every relevant pair of clusters $(C_1,C_2)$ and let $X_i$ and $Y_i$ be the clusters forming $C_i$.
We call $X_i$ and $Y_i$ subclusters to distinguish them from clusters $C_i$.
We group all stars of type II contributing to this pair by subclusters containing central nodes of the stars in $C_1$ (either $X_1$ or $Y_1$) and $C_2$ (either $X_2$ or $Y_2$).
We describe the calculations in detail for all stars induced by leaves $x,y,z,t$ with central nodes in clusters $Y_1$ and $X_2$, as in Figure~\ref{fig:type_2_repr}, the other cases are symmetric.

Let the leaves of a star $s$ be located in the clusters as in Figure~\ref{fig:type_2_repr}, that is in tree $T_1$ $x$ is in $X_1$, $y$~and~$z$ in $Y_1$ and $t$ is outside the considered cluster $C_1=R_1'(s)$, from the side of subcluster $Y_1$.
Regarding the location in $T_2$, among leaves $y$ and $z$, let $z$ be the leaf in $Y_2$ and $y$ outside $C_2=R_2'(s)$.
Lastly, $x$ and $t$ are in~$X_2$.
Let the spine of a cluster be the path connecting its two boundary nodes.
We say that a leaf $\ell$ connects to the spine $S$ in a node $u$ if $u$ is the closest node from $S$ to $\ell$.

We iterate over all leaves $z$ which are both in $Y_1$ and $Y_2$. 
From our assumptions, $z$ is the only common leaf of $Y_1,Y_2$ and $s$.
We need to count leaves $y$ such that:
\begin{enumerate}[label=(\roman*)]
 \item in $T_1$ connect to the same node on the spine of $Y_1$ as $z$, but with a different edge, and
 \item in $T_2$ are outside $C_2$, from the side of $X_2$.
\end{enumerate}
See Figure~\ref{fig:connect_to_spine} for the locations of $y$ with respect to $z$ in $Y_1$ in $T_1$ that we want to count or not.
Observe that the choice of leaves $x$ and $t$ is independent from $y$ and $z$.
Hence we can count pairs $x$ and $t$ in the same way as $y$ and $z$ and then multiply the obtained numbers.
Till the end of this subsection we focus only on counting pairs $y$ and $z$.

\FIGURE{h}{1.}{connect_to_spine}{For a fixed leaf $z$ we need to count leaves $y$ such that they connect to the spine in the same node as $z$, but with a different edge.
(a) Included, (b) excluded location of $y$ with respect to $z$. (c) For a fixed leaf $z$ we use nodes $u,z'$ and $b_2'$ to count all leaves $y$ satisfying both conditions (i) and (ii).}

Now we show that both the above conditions on $y$ can be phrased in terms of counting points in rectangles, for which we can use existing techniques.
\begin{lemma}\label{le:counting_leaves_y}
 After $\Oh(n\log n)$ time preprocessing of trees $T_1$ and $T_2$, for any leaf $z$ we can count leaves~$y$ which satisfy both conditions (i) and (ii) in $\Oh(\log n)$ time.
\end{lemma}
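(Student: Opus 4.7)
The plan is to translate conditions (i) and (ii) into a constant number of axis-aligned rectangle constraints on a pair of integer coordinates associated with each leaf, and then invoke a standard two-dimensional orthogonal range counting data structure.

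Preprocessing. I would compute an Euler tour of $T_1$ and $T_2$, giving each leaf $y$ a pair of coordinates $(\phi_1(y),\phi_2(y))\in[1,n]^2$ with the standard property that the leaves in the subtree of any node $w\in T_i$ form a contiguous interval of $\phi_i$-values. In addition I would build constant-time LCA structures on both trees in $\Oh(n)$ time, and a heavy-light decomposition of $T_1$ so that for any leaf $z$ and any path $P$ in $T_1$ the unique node of $P$ closest to $z$ (if it exists) can be located in $\Oh(\log n)$ time. Finally I would store the $n$ points $\{(\phi_1(y),\phi_2(y))\}_{y}$ in a persistent segment tree (or wavelet tree) built in $\Oh(n\log n)$ time, supporting orthogonal range counting in $\Oh(\log n)$ time per query.

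Encoding condition (i). For a query leaf $z$, use the heavy-light structure to find in $\Oh(\log n)$ time the node $u$ on the spine of $Y_1$ to which $z$ connects, i.e.\ the closest point of the spine to $z$. Then $y$ satisfies (i) exactly when $y$ lies in the subtree of $u$, but not in the subtree rooted at $u$'s spine neighbor toward either boundary node of $Y_1$, and not in the subtree rooted at the child of $u$ on the path to $z$. By the subtree-interval property of the Euler tour, this is a single interval of $\phi_1$-values with at most three sub-intervals removed, hence a union of $\Oh(1)$ intervals in the $\phi_1$-coordinate. Encoding condition (ii). By Fact~\ref{fact:toptree_properties}(iii) the two boundary nodes of $C_2$ stand in an ancestor--descendant relation, so the set of leaves that are outside $C_2$ from the side of $X_2$ is again a union of $\Oh(1)$ intervals of $\phi_2$-values; this description depends only on the relevant pair $(C_1,C_2)$ (not on $z$), so it can be computed once when the pair is processed.

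Combining. The number of leaves $y$ satisfying both (i) and (ii) equals the number of stored points $(\phi_1(y),\phi_2(y))$ that fall in a fixed constant number of axis-aligned rectangles obtained as products of the intervals from the two encodings. Each rectangle is answered in $\Oh(\log n)$ time, giving the claimed $\Oh(\log n)$ total query time after $\Oh(n\log n)$ preprocessing. The main obstacle is ensuring that condition (i) really decomposes into $\Oh(1)$ intervals in every configuration: degenerate situations arise when $u$ coincides with a boundary node of $Y_1$, when $z$ hangs directly off the spine, or when the merge producing $C_1$ is of a type in which one of the spine neighbors of $u$ is absent. A short case analysis over the five merge types depicted in Figure~\ref{fig:toptrees} shows that at most three subtrees of $u$ have to be excluded in each case, so the $\Oh(1)$ bound on the number of intervals, and hence on the number of rectangles, is preserved uniformly.
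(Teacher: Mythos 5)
Your proposal is correct and follows essentially the same route as the paper: both conditions are reduced to membership in $\Oh(1)$ contiguous intervals of traversal indices in each tree (subtree of the attachment node $u$ minus the subtrees toward $z$ and toward the spine, intersected with the outside part of $C_2$), and the count is obtained from a constant number of 2-D orthogonal range counting queries after $\Oh(n\log n)$ preprocessing. The only cosmetic difference is that the paper locates $u$ and the two excluded children via constant-time (extended) LCA queries rather than a heavy-light walk, but either fits within the $\Oh(\log n)$ query budget.
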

\begin{proof}
 Consider a pre-order numbering of all nodes of trees $T_i$.
 Every subtree of (unrooted) tree~$T_i$ corresponds either to one or two contiguous intervals of pre-order indices.
 Similarly, from the properties of top tree decomposition, the outside parts of a cluster also form one or two contiguous intervals of indices.
 Hence, a query about the number of common leaves in a subtree and the outside part of a cluster is the number of leaves with their pre-order indices in both trees inside particular ranges.
 This, in turn, can be answered efficiently using a constant number of queries about the number of points in rectilinear rectangles in the plane, also known as 2-D orthogonal range counting queries \cite{Chazelle88,JaJaMS04}, that can be answered in $\Oh(\log n)$ time
 after an $\Oh(n\log n)$ time preprocessing.
 
 However, condition (i) on $y$ is more involved than simply belonging to a particular subtree.
 Without loss of generality, suppose that $b_1$ is ancestor of $b_2$ (recall Fact~\ref{fact:toptree_properties}).
 Note that $z$ connects to the spine $b_1\cdots b_2$ in the lowest common ancestor (LCA) of $z$ and $b_2$, call this node $u$.
 Let $z'$ be the last node on the path from $z$ to $u$ and $b_2'$ be the last node on the path from $b_2$ to~$u$, as in Figure~\ref{fig:connect_to_spine}(c).
 After a linear-time preprocessing of $T_1$, node $u$, the LCA of $z$ and $b_2$, can be found in constant time \cite{SchieberV88}.
 With a slight modification (called the extended LCA query), we can also find nodes $z'$ and $b_2'$ in the same complexity\cite{GasieniecKPS05}.
 Now we need to count leaves $y$ that are in the subtree of $u$, but not in the subtree of $z'$ nor $b_2'$.
 To sum up, we described the condition (i) in terms of belonging or not to particular subtrees and hence the number of leaves $y$ satisfying both (i) and (ii) can be computed efficiently.
\end{proof}
\noindent
As every leaf belongs to $\Oh(\log n)$ clusters in each of the top trees, the following holds:

\begin{fact}\label{fact:sum_of_intersections}
 The total number of common leaves over all pairs of relevant clusters is $\Oh(n\log^2 n)$.
\end{fact}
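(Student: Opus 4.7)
The plan is to prove Fact~\ref{fact:sum_of_intersections} by a straightforward double counting argument that leverages Fact~\ref{fact:toptree_properties}(i). The quantity we want to bound is
\[
S \;=\; \sum_{(C_1,C_2) \text{ relevant}} \bigl|\text{leaves}(C_1) \cap \text{leaves}(C_2)\bigr|.
\]
I would rewrite $S$ as the cardinality of the set
\[
\mathcal{S} \;=\; \bigl\{(\ell, C_1, C_2) : C_1 \in \TT_1,\ C_2 \in \TT_2,\ \ell \text{ is a leaf of both } C_1 \text{ and } C_2\bigr\},
\]
observing that non-relevant pairs contribute zero triples (hence their inclusion or exclusion is irrelevant), while each relevant pair $(C_1, C_2)$ contributes exactly $|\text{leaves}(C_1) \cap \text{leaves}(C_2)|$ triples, one per common leaf.

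Next, I would count $|\mathcal{S}|$ by iterating over leaves first: for each leaf $\ell$, the number of triples with first coordinate $\ell$ equals $|\{C_1 \in \TT_1 : \ell \in C_1\}| \cdot |\{C_2 \in \TT_2 : \ell \in C_2\}|$. By Fact~\ref{fact:toptree_properties}(i), each of these two factors is $\Oh(\log n)$, so $\ell$ contributes $\Oh(\log^2 n)$ triples. Summing over all $n$ leaves gives $|\mathcal{S}| = \Oh(n \log^2 n)$, which is exactly the desired bound on $S$.

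There is no substantive obstacle here: the entire argument is a one-line exchange of summation order combined with Fact~\ref{fact:toptree_properties}(i). The only thing to be careful about is to make clear that we are bounding the \emph{total multiplicity} of common leaves (with a leaf counted once per relevant pair containing it), not the number of distinct leaves across relevant pairs, since the latter is trivially at most $n$ and is not what is used later in the section.
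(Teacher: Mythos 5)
Your proof is correct and matches the paper's (implicit) argument exactly: the paper justifies this fact by the preceding remark that every leaf lies in $\Oh(\log n)$ clusters of each top tree, so each leaf contributes to $\Oh(\log^2 n)$ relevant pairs, giving $\Oh(n\log^2 n)$ in total. Your explicit double-counting formalization is the same reasoning, just written out more carefully.
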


To conclude, for every relevant pair of clusters we iterate over all their common leaves $z$ and count leaves $y$ satisfying both (i) and (ii).
By Fact~\ref{fact:sum_of_intersections} and Lemma~\ref{le:counting_leaves_y} counting all stars of type II takes $\Oh(n\log^2n\cdot \log n)=\Ohtilda(n)$.

\subsection{Counting Stars of Type I}

\newcommand{\BM}{\mathcal{M}}
\newcommand{\LL}{\mathcal{L}}

Recall that every star $s$ of type I contributes to the relevant pair of clusters $(R_1(s),R_2(s))$.
For this reason it is enough to iterate only over all relevant pairs of clusters and count stars contributing to the current pair.
From Fact~\ref{fact:toptree_properties} there are $\Oh(n\log^2 n)$ such pairs and from Fact~\ref{fact:sum_of_intersections} the overall number of common leaves in all of them is also $\Oh(n\log^2 n)$. 

We say that $(R_1(s),R_2(s))$ is the representative pair of star $s$.
Consider a relevant pair of clusters $(C_1,C_2)$.
Before we proceed with the general case, we first consider the following special case when a star $s$ with central nodes $c_1$ and $c_2$ has all its four leaves in clusters $C_1=R_1(s)$ and $C_2=R_2(s)$, or in other words, $s$ is fully contained in clusters of its representative pair.
Recall that in Section~\ref{se:from_qd_to_4c} we constructed a bipartite multigraph $\BM$ in such a way that nodes on the left (respectively right) correspond to subtrees attached to $c_1$ (respectively $c_2$).
See Figure~\ref{fig:star_to_m4} for an illustration.
We proceed similarly, except that we are only interested in counting stars with all leaves in both $C_1$ and $C_2$.
Therefore we redefine the multiplicity of an edge to be the number of such common leaves in the corresponding subtrees of $C_1$ and $C_2$.
We are interested only in edges with non-zero multiplicities, so some of the nodes might be isolated.
We would like to completely disregard such isolated nodes and construct the graph in time proportional to the number of edges with non-zero multiplicities.

Let $\LL$ be the set of common leaves of $C_1$ and $C_2$.
Notice that every such leaf contributes to one multi-edge of $\BM$.
We iterate over all leaves in $\LL$ and update the multiplicities of edges as follows.
Given a leaf, we extract the endpoints of its corresponding edge using extended LCA queries.
Then we look up the corresponding edge in a dictionary and, if it already exists, increase its multiplicity or create a new edge otherwise.
This allows us to construct $\BM$ in $\Oh(|\LL|\log|\LL|)$ time.
Clearly, it holds $|E(\BM)|\le |\LL|$.

It is crucial that the time of construction of $\BM$ depends only on $\LL$, because then Fact~\ref{fact:sum_of_intersections} implies that the overall time of constructing multigraphs $\BM$ for all pairs of relevant clusters is $\Ohtilda(n)$.

\paragraph{Complexity.}
Before we describe the algorithm for counting all stars, let us summarize the complexity of the approach presented so far.
Recall that $\pow$ is the smallest number such that there exists an algorithm counting 4-cycles in $m$-edge simple graphs in $\Oh(m^\pow)$ time.
Combining Theorem~\ref{thm:counting_multi} and Corollary~\ref{cor:c4_in_multigraphs} we obtain that we can count matchings of size 4 ($\sJ$) in multigraphs with $m$ edges in $\Oh(m^\pow\log^4m)=\Ohtilda(m^\pow)$ time.
The algorithm of Vassilevska Williams et al. \cite{WilliamsWWY15} runs in time $\Oh(m^{\frac{4\omega-1}{2\omega+1}})=\Oh(m^{2-\frac{3}{2\omega+1}})=\Oh(m^{1.478})$, as $\omega<2.373$ \cite{Gall14a,Williams12}, so $\pow<1.48$.

Let $m_i$ be the number of common leaves in the $i$-th considered relevant pair of clusters and hence also the bound on the number of edges in the $i$-th multigraph $\BM$.
From Fact~\ref{fact:sum_of_intersections} we know that $\sum_i m_i =\Oh(n\log^2n)$, where $i$ ranges over all relevant pairs of clusters.
So the overall time of counting all stars of type I is:

\begin{equation}\label{eq:complexity}
\sum_i \Ohtilda(m_i^\pow)=\Ohtilda\left(\sum_i m_i^\pow\right) =
\Ohtilda\left(\frac {\sum_i m_i}n n^\pow\right) = 
\Ohtilda\left(n^\pow\log^2n\right) = \Ohtilda(n^\pow) 
\end{equation}
 \noindent
where we used convexity of $x^\pow$ (as $\pow\geq 1$), $m_i\le n$ and $\sum_im_i=\Ohtilda(n)$.
To sum up, our algorithm counts all stars fully contained in their representative pairs in $\Ohtilda(n^\pow)=\Oh(n^{1.48})$ time.

\paragraph{Almost all stars of type I.}
\newcommand{\nNormal}{\ref{node_type:normal}\xspace}
\newcommand{\nMerged}{\ref{node_type:merged}\xspace}
\newcommand{\nOutside}{\ref{node_type:outside}\xspace}

Now we modify the above approach to count all stars of type~I, not necessarily fully contained in their representative pairs.
The main difficulty is that now the stars can contain leaves outside of $\LL$ and we cannot explicitly insert them as edges in the multigraph, as we want to keep the $\Ohtilda(|\LL|)$ running time.
We define a modified bipartite multigraph $\BM'$ in a similar way as before.
For every side of the bipartite graph $i\in\{1,2\}$, let a neighbor of $c_i$ be implicit if it does not contain a leaf from~$\LL$ nor contains an outside part of $C_i$, otherwise we call it explicit.
In $\BM'$ we have three types of nodes for every side $i$ of the bipartite graph:
\begin{enumerate}[label=(\arabic*)]
 \item\label{node_type:outside} at most two nodes for subtrees connected to $c_i$ that contain an outside part of the cluster~$C_i$,
 \item\label{node_type:normal} at most $|\LL|$ nodes for subtrees connected to $c_i$ that contain a leaf from $\LL$, but do not contain an outside part of $C_i$,
 \item\label{node_type:merged} one node representing all subtrees attached to the implicit neighbors of $c_i$.
\end{enumerate}
Thus, every node corresponds to a collection of subtrees of the whole (unrooted) $T_i$.
The multiplicity of an edge is simply the number of common leaves of subtrees corresponding to its endpoints (not necessarily only from $C_1$ and $C_2$).
See Figure~\ref{fig:3_types_in_multi}.

\FIGURE{h}{1}{3_types_in_multi}{Bipartite multigraph $\BM'$ with three types of nodes. The outside parts of the clusters are marked with capital letters.
We do not explicitly list leaves from the outside parts, but only their multiplicity (i.e. $3\times A$) which can be obtained with orthogonal queries.
Implicit neighbors of the central node are marked with crosses.
To avoid clutter, three edges with multiplicities $1$ are omitted.}

We need to show how to construct $\BM'$ in $\Oh(|\LL|\log|\LL|)$ time.
We start with listing all nodes of type \nNormal, similarly as we did for $\BM$.
We say that an edge is of type $(a)$-$(b)$ for $a,b\in\{1,2,3\}$, when it connects a node of type $(a)$ on the left side of the graph and $(b)$ on the other.
Now we describe how to construct in $\Oh(|\LL|\log |\LL|)$ time edges of each type separately.

\begin{enumerate}
 \item \nNormal-\nNormal: We obtain all these edges together with their multiplicities by iterating over all leaves from $\LL$, as we did for $\BM$.
 \item \nOutside-\nOutside: Even though we obtained some multiplicities of these edges while iterating over~$\LL$, we disregard them and use orthogonal queries for intersection of particular ranges to obtain the multiplicities.
 In total there are at most 2 nodes of type \nOutside at each side of the graph.
 \item \nOutside-\nNormal, \nNormal-\nOutside: Similarly as above, we disregard all edges of this type found while iterating over $\LL$ and use orthogonal queries to obtain the multiplicities.
 There are $\Oh(|\LL|)$ nodes of type \nNormal, so this step runs in $\Oh(|\LL|\log|\LL|)$ time.
 \item \nNormal-\nMerged, \nMerged-\nNormal, \nMerged-\nMerged: These edges always have multiplicity 0, because otherwise there would be a common leaf in the corresponding subtree making the nodes of different type.
 \item \nOutside-\nMerged, \nMerged-\nOutside:  We use orthogonal range queries and the multiplicities of edges computed so far to retrieve the multiplicities of the $\Oh(1)$ remaining edges.
\end{enumerate}

To conclude, we can construct the bipartite multigraph $\BM'$ with $\Oh(|\LL|)$ non-zero edges in $\Oh(|\LL|\log|\LL|)$ time.
Now we would like to count matchings of size 4 ($\sJ$) in $\BM'$, but this is not enough yet.

\paragraph{Missing stars.} 
We have not counted stars that have two leaves in subtrees attached to different implicit neighbors of $c_i$, because the subtrees are merged to one node of type \nMerged and, by counting matchings $\sJ$, we allow choosing at most one leaf from them.
Observe that stars that have two leaves in subtrees attached to implicit neighbors of $c_i$ in both clusters have no common leaves with $\LL$.
Hence they are of type II and are counted separately in Section~\ref{se:stars_type_II}.
To sum up, we only have not counted stars of type I with two leaves from the outside parts of exactly one of the clusters that are from subtrees of implicit neighbors of $c_i$.
We call such stars missing.
In terms of matchings, a missing star corresponds to choosing two edges from one node of type \nMerged that are incident to two different nodes of type~\nOutside.
Because we identified all nodes of type~\nMerged, this is not a matching in $\BM'$, but we need to count such stars as well.

\FIGURE{h}{.9}{missing_star}{Missing star with leaves $a$ and $b$ from outside parts of $C_1$ and $c$ and $d$ that are both in $C_1$ and $C_2$.
Our pictures are rotated, that is the upper boundary of the cluster is on the left and the bottom one on the right.}

We show how to count all missing stars for which the situation described above takes place in cluster $C_2$, that is the missing star consists of two leaves $a$ and $b$ from the outside parts $A$ and $B$ of $C_1$.
Suppose that $B$ is the outside part below $C_1$ (in the rooting of $T_1$ that we use) and $A$ is above $C_1$.
See Figure~\ref{fig:missing_star}.
In $T_2$, leaves $a$ and $b$ are from two different subtrees attached to some implicit neighbors of $c_2$ which are merged together in $\BM'$.
Leaves $c$ and $d$ are from distinct explicit neighbors of $c_1$ and $c_2$.
Let $\alpha_i$ and $\beta_i$ be the number of leaves from $A$ and from $B$ in the $i$-th subtree connected to $c_2$.
Then the number of choices of leaves $a\in A$ and $b\in B$  that are in subtrees attached to two distinct implicit neighbors of $c_2$ is $\sum_{i\ne j} \alpha_i\beta_j$ where $i$ and $j$ range only over the implicit neighbors of $c_2$.
Now we need to multiply this number by the number of 2-matchings in $\BM'$ with three nodes deleted: two nodes of type~\nOutside from the side of $c_1$ and the node of type \nMerged from the side of $c_2$.
Using the notation from Appendix~\ref{se:missing_proofs} we can compute the number of 2-matchings in $\Oh(|\LL|)$ time:
$$\cc= = \frac12 \left(\sum_{(u,v)\in E}\mult(u,v)\multi{\EminusUV}{1}\right)$$

Now we need to compute $\sum_{i\ne j} \alpha_i\beta_j$ where $i$ and $j$ range only over implicit neighbors of $c_2$.
Notice that we can compute every single value of $\alpha_i$ or $\beta_j$ with orthogonal queries.
Next, we can omit the requirement that we iterate only through implicit neighbors, because we can compute the sum for all neighbors of $c_2$ and subtract appropriate terms for explicit nodes in $\Ohtilda(|\LL|)$ time.
Finally, $\sum_{i\ne j} \alpha_i\beta_j=(\sum_i\alpha_i)\cdot(\sum_i\beta_i)-\sum_i\alpha_i\beta_i=|A|\cdot|B|-\sum_i\alpha_i\beta_i$, so we can focus only on computing the last expression.
In the next paragraph we restate this subproblem again and describe in detail how to calculate the desired sum efficiently.

\paragraph{Computing $\sum_i\alpha_i\beta_i$.}

\newcommand{\Mark}{\textsf{Mark}}
\newcommand{\Count}{\textsf{Count}}

In the previous paragraph we distilled the following subproblem.
Consider a relevant pair of clusters $(C_1,C_2)$ with merged boundary node $c_2\in C_2$.
Let $A$ be the outside part of $T_1$ above $C_1$ (in the considered rooting of $T_1$) and $B$ below $C_1$.
All leaves of $T_2$ are marked with color $\perp,A$ or $B$ which denotes that the leaf is inside $C_1$, in the outside part $A$ or in $B$, respectively and we call such marking the marking with respect to $C_1$.
Our aim is to compute $\sum \alpha_i\beta_i$, where $\alpha_i$ and $\beta_i$ denote respectively the number of leaves of color $A$ and $B$, in the $i$-th subtree connected to $c_2$.
We need to count the sum for all relevant pairs of clusters efficiently.
To simplify the presentation, we proceed off-line, that is we will compute and store answers for all the above queries.

Our algorithm resembles the approach of Brodal et al. in Section 5 of \cite{BrodalFMPS13}.
We keep a separate data structure supporting the following operations on $T_2$ in $\Oh(\log n)$ time:

\begin{itemize}
 \item $\Mark(u,c)$ - marks node $u\in T_2$ with color $c\in\{A,B,\perp\}$,
 \item $\Count(u)$ - computes $\sum_i\alpha_i\beta_i$ where $i$ ranges over all neighbors of the node $u \in T_2$.
\end{itemize}
We consider all clusters of top tree $\TT_1$ in the order of DFS traversal starting at the root.
We maintain the following invariant during the traversal:
$$\textit{When entering a cluster }C\in\TT_1\textit{, all leaves in }T_2\textit{ are marked with respect to cluster }C.$$

\noindent
We start with the cluster representing the whole tree $T_1$ and all leaves in $T_2$ are marked with~$\perp$.
Then we traverse the top tree $\TT_1$ top-down and suppose that we consider cluster $C_1$ formed by merging clusters $C'$~and~$C''$.
From the invariant, all leaves in $T_2$ are appropriately marked and we can call $\Count(s_2)$ and store the result for all merged boundary nodes $s_2$ of clusters $C_2$ such that $(C_1,C_2)$ is a relevant pair.
Then, while entering cluster $C'$ we need to mark all leaves from $C''$ with color $A$ or $B$, depending on the location of $C''$, recurse and, while exiting, mark leaves from $C''$ with $\perp$.
Then we proceed similarly for $C''$.
From Fact~\ref{fact:toptree_properties} there will be $\Oh(n\log n)$ updates in total.

\begin{lemma}
 There exists a data structure supporting {$\normalfont\Mark(u,c)$} and {$\normalfont\Count(u)$} operations in $\Oh(\log n)$ time.
\end{lemma}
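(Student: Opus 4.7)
The plan is to combine a heavy-light decomposition of $T_2$, rooted at an arbitrary leaf, with two Euler-tour Fenwick trees and a small amount of extra bookkeeping kept at every heavy-path top.

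For each color $X\in\{A,B\}$ I maintain a Fenwick tree $F_X$ indexed by the Euler tour of $T_2$, so that every $\Mark$ amounts to a constant number of point updates in these Fenwick trees and the subtree count $X(T_v)$ (number of $X$-marked leaves below $v$) is a range sum answerable in $\Oh(\log n)$. I also keep global totals $A_{\mathrm{tot}}$ and $B_{\mathrm{tot}}$. To evaluate $\Count(u)$ I split its defining sum by the heavy-light decomposition: let $c_h$ be the heavy child of $u$ (if any) and define $h(v)=\sum_{c\text{ light child of }v} A(T_c)\,B(T_c)$ for every node $v$. Because the missing neighborhood of $u$ is the subtree attached to $u$ through its parent edge,
\[
\Count(u) \;=\; A(T_{c_h})\,B(T_{c_h}) \;+\; h(u) \;+\; \bigl(A_{\mathrm{tot}}-A(T_u)\bigr)\bigl(B_{\mathrm{tot}}-B(T_u)\bigr).
\]
Given $h(u)$ stored explicitly, the four remaining quantities require four Fenwick queries and so $\Count(u)$ runs in $\Oh(\log n)$ time.

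The nontrivial part is keeping all values $h(v)$ consistent under marks within the same budget. The structural property of the heavy-light decomposition I exploit is that the path from any leaf to the root crosses only $\Oh(\log n)$ light edges, and the non-root heavy-path tops above $\ell$ are exactly the children $c$ whose quantities $A(T_c)$ and $B(T_c)$ can change as a consequence of marking $\ell$. To make each corresponding update of $h(\mathrm{parent}(c))$ cost $\Oh(1)$, I keep explicit counters $a_c$ and $b_c$ at every heavy-path top $c$, equal by invariant to $A(T_c)$ and $B(T_c)$. A call to $\Mark(\ell,X)$ walks from $\ell$ to the root by repeatedly jumping to the top of the current heavy path and then to the parent of that top; at each visited heavy-path top $c$ with parent $v$ it performs $x_c\pluseq 1$ and $h(v)\pluseq y_c$, where $y$ denotes the partner color. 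Since the change $\Delta(a_c b_c)=y_c$ when $x_c$ is increased by one, the invariant $h(v)=\sum_c a_c b_c$ is preserved; analogous decrements handle the transition to $\perp$. The walk visits $\Oh(\log n)$ heavy-path tops, and the Fenwick update for $\ell$ itself costs another $\Oh(\log n)$.

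The main obstacle is precisely this update rule for $h(\cdot)$: if we tried to read the partner value $B(T_c)$ off the Fenwick tree at each visited light ancestor we would need $\Oh(\log n)$ time per ancestor and the mark cost would inflate to $\Oh(\log^2 n)$. Storing $a_c$ and $b_c$ explicitly at heavy-path tops sidesteps this, and maintaining them is cheap because each leaf lies in the subtrees of only $\Oh(\log n)$ heavy-path tops.
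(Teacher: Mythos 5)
Your proof is correct and follows essentially the same route as the paper: a heavy-light decomposition of the rooted $T_2$, a per-node aggregate $\sum_{c\ \mathrm{light}} A(T_c)B(T_c)$ updated along the $\Oh(\log n)$ light ancestors of the marked leaf, and a query that adds the heavy-child and parent-side contributions via subtree/range counting. Your explicit counters $a_c,b_c$ at heavy-path tops just spell out how each per-light-ancestor update is done in $\Oh(1)$, a detail the paper leaves implicit.
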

\begin{proof}
 Recall that $T_2$ is rooted, so we can apply heavy-light decomposition \cite{SleatorT83} to it.
 The root is called light and every node calls its child with the largest subtree (and the leftmost in case of ties) heavy and all other children light.
  
 For every node $v$ of $T_2$ we maintain the counter: $\sum_\ell\alpha_\ell\beta_\ell$ where $\ell$ ranges only over the light children of $v$ and counters $\alpha_\ell$ and $\beta_\ell$ for all light children of $v$.
 Every update (marking) of a node $w$ first changes the color of~$w$.
 Then we iterate over all its light ancestors and appropriately update their counters.
 From the properties of heavy-light decomposition, every node has $\Oh(\log n)$ light ancestors, so the update takes $\Oh(\log n)$ time.
 
 To answer the $\Count(u)$ query, we use the $\sum_\ell\alpha_\ell\beta_\ell$ counter for $u$ and need to add the values for its parent and the heavy child in the rooted tree, if they exist.
 We obtain the latter values using a constant number of orthogonal range queries in $\Oh(\log n)$ time.
\end{proof}
\noindent
To conclude, we can aggregately answer all queries of $\sum_i\alpha_i\beta_i$ in $\Ohtilda(n)$ time and then count all the missing stars.
Hence, we can count all shared stars of type I in $\Ohtilda(n^{\pow})$ time where $\Oh(m^\pow)$ is the best complexity of an algorithm counting 4-cycles in a graph with $m$ edges.
As we counted all stars of type II in $\Ohtilda(n)$ time, the whole algorithm counting shared stars in $T_1$ and $T_2$ runs in $\Ohtilda(n^{\pow})$ time.
This concludes the proof of Theorem~\ref{thm:quartet_final}.

\paragraph{Dependency on $d$.}
In this paragraph we analyze the complexity of the algorithm with respect to the maximum degree $d$ of an internal node.

Recall that our algorithm counts 4-cycles in multiple multigraphs. Let $n_i$ and $m_i$ be the number of nodes and edges
in the $i$-th considered multigraph.
When bounding the total complexity in \eqref{eq:complexity} we only used the fact that $m_i\le n$.
However, in our construction $n_i=\Oh(d)$, so also $m_i=\Oh(d^2)$.
Hence our algorithm runs in time $\Ohtilda(\frac{n}{d^2}d^{2\pow})=\Oh(nd^{0.96})$ as $\pow <1.48$ \cite{WilliamsWWY15}.

Notice that, for dense graphs it is more desirable to use the algorithm by Alon et al.~\cite{AlonYZ97} that runs in $\Oh(n^\omega)$
time where $\omega<2.373$ \cite{Gall14a,Williams12}.
This change decreases the complexity of our algorithm to $\Ohtilda\left(\sum_i \min(n_i^\omega,m_i^\pow)\right)$,
where $\sum_i n_i =\Ohtilda(n)$, $\sum_i m_i=\Ohtilda(n)$ and, for every $i$, it holds that $n_i=\Oh(d)$ and $m_i\le \min(n_i^2,n)$.

Bounding the sum $\sum_i \min(n_i^\omega,m_i^\pow)$ is not immediate, so we divide its terms into $\log^2n$ groups identified by
a pair $(k,\ell)$ of parameters such that $n_i\in (2^{k-1},2^k]$ and $m_i\in (2^{\ell-1},2^\ell]$.
Observe that there are at most $\Ohtilda(n/2^{\max(k,\ell)})$ terms in every group $(k,\ell)$ due to the bound on the total number of nodes and edges.
Now we divide all the groups into three categories, depending on the relation of $k$ and $\ell$:
$k<\frac\pow\omega\ell$ or $\frac\pow\omega\ell \le k < \ell$ or $\ell \le k$.
For each of them we bound the corresponding terms by~$\Ohtilda(nd^\delta)$ where $\delta=\omega-\omega/\pow<0.77$.
To sum up, the algorithm runs in $\Ohtilda(nd^{0.77})$ time.

\section{Acknowledgments}

We thank Yinzhan Xu for pointing out an error in our original proof in Section~\ref{se:small_mult}.

\bibliography{biblio}

\appendix

\section{Counting Shapes in Multigraphs}\label{se:missing_proofs}

\begin{proof}[Proof of Theorem~\ref{thm:counting_multi}]
 We will generalize the calculations from Lemmas \ref{le:m4}, \ref{le:calculations_easy} and \ref{le:calculations_hard} for multigraphs.
 Recall that every edge $e$ can appear in the graph multiple times, so we need to take $\mult(e)$ into account when counting all shapes containing an edge $e$.
 Informally, we need to be more careful while using the binomial coefficient.
 Let $\multi Sk$ denote the number of ways of choosing $k$ distinct edges from a set $S\subseteq E$ of multi-edges, that is we cannot take more than one copy of any edge.
 Recall that every edge appears in $E$ exactly once, but separately we also have a function $\mult$ that returns multiplicity of every edge $e\in E$.
 \begin{fact}
  $\multi Sk$ can be computed from $S$ in $\Oh(|S|k)$ time using dynamic programming.
 \end{fact}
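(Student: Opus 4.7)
The plan is to recognize $\multi Sk$ as the $k$-th elementary symmetric polynomial in the multiplicities $\{\mult(e): e\in S\}$ and compute it by a standard one-dimensional knapsack-style dynamic program. Concretely, choosing $k$ distinct edges from $S$ (where ``distinct'' refers to the underlying edge, while each edge $e$ provides $\mult(e)$ interchangeable copies) amounts to picking a subset $T\subseteq S$ of size $k$ and then, independently for each $e\in T$, picking one of its $\mult(e)$ copies, so
\[
\multi Sk \;=\; \sum_{\substack{T\subseteq S\\|T|=k}}\;\prod_{e\in T}\mult(e).
\]

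Fix an arbitrary ordering $e_1,e_2,\ldots,e_{|S|}$ of the edges of $S$ and define $f(i,j)$ to be the analogous quantity for the first $i$ edges and size $j$, i.e.\ $f(i,j)=\multi{\{e_1,\ldots,e_i\}}{j}$. Splitting the sum according to whether $e_i$ is selected yields the recurrence
\[
f(i,j) \;=\; f(i-1,j) \;+\; \mult(e_i)\cdot f(i-1,j-1),
\]
with base cases $f(0,0)=1$ and $f(0,j)=0$ for $j\geq 1$; the desired value is $f(|S|,k)$. Filling the table in order of increasing $i$ takes constant time per cell and uses only $O(|S|\cdot k)$ cells (in fact one can keep a single array of length $k+1$ and update it in place from $j=k$ down to $j=1$, giving $O(k)$ working space), yielding the claimed $O(|S|k)$ runtime.

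There is essentially no obstacle beyond checking correctness of the recurrence, which follows immediately from partitioning the subsets of size $j$ from $\{e_1,\ldots,e_i\}$ into those that avoid $e_i$ (contributing $f(i-1,j)$) and those that contain $e_i$ (contributing $\mult(e_i)\cdot f(i-1,j-1)$, since once $e_i$ is included we must pick one of its $\mult(e_i)$ copies and then choose the remaining $j-1$ edges from the first $i-1$). This establishes the fact.
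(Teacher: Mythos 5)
Your proof is correct and matches the approach the paper intends: the paper states this fact without proof, merely invoking ``dynamic programming,'' and the standard DP you spell out (the elementary-symmetric-polynomial recurrence $f(i,j)=f(i-1,j)+\mult(e_i)\,f(i-1,j-1)$) is exactly that argument. Nothing is missing.
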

 \noindent
 For our purposes, $k$ will be always at most $4$.
 We first memorize $\multi Sk $ for all $k\le 4$ and the following sets $S$: $E,\{e\}$ for all edges $e\in E$, and $E(v)$ (set of all edges incident to $v$) for all nodes~$v$.
 Then we can combine the memorized values to compute $\multi{C}{k}$ for different sets $C$ using the following lemma:
 \begin{lemma}
 Let $A$ and $B$ be sets of edges such that $B \subseteq A$ and $\multi Ai,\multi Bi$ are already computed for all $0\le i\le k$.
 Then all values $\multi{A\setminus B}{i}$, for $0\leq i\leq k$, can be computed in $\Oh(k^2)$ time.
 \end{lemma}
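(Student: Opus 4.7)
The plan is to recognize that $\multi{S}{k}$ is essentially the $k$-th elementary symmetric polynomial evaluated at the multiplicities of edges in $S$. Concretely, define the generating function
\[
f_S(x) = \prod_{e\in S}\bigl(1+\mult(e)\,x\bigr),
\]
and observe that the coefficient of $x^k$ in $f_S(x)$ is exactly $\multi{S}{k}$, since expanding the product selects, for each choice of $k$ edges from $S$, the contribution $\prod_{e}\mult(e)$, in agreement with the definition of choosing $k$ distinct edges while weighting each edge by its multiplicity.

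Because $B \subseteq A$, the set $A$ is the disjoint union of $B$ and $A\setminus B$, so the generating function factors as $f_A(x) = f_B(x)\cdot f_{A\setminus B}(x)$. Extracting the coefficient of $x^i$ yields the convolution identity
\[
\multi{A}{i} \;=\; \sum_{j=0}^{i} \multi{B}{j}\,\multi{A\setminus B}{i-j}.
\]
Since $\multi{B}{0}=1$, this relation is lower-triangular in $\multi{A\setminus B}{\cdot}$ and can be inverted term by term:
\[
\multi{A\setminus B}{i} \;=\; \multi{A}{i} \;-\; \sum_{j=1}^{i} \multi{B}{j}\,\multi{A\setminus B}{i-j}.
\]

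The algorithm then computes $\multi{A\setminus B}{i}$ for $i=0,1,\ldots,k$ in this order, reading off the already-known values $\multi{A}{i}$ and $\multi{B}{j}$ and the previously computed $\multi{A\setminus B}{i-j}$. The $i$-th step performs $\Oh(i)$ multiplications and subtractions, so the total running time is $\sum_{i=0}^{k}\Oh(i) = \Oh(k^2)$, as required. There is no real obstacle here beyond spotting the generating-function factorization; the argument is just deconvolution.
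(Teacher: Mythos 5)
Your proposal is correct and uses essentially the same argument as the paper: the paper's proof consists of exactly the recurrence $\multi{A\setminus B}{j}=\multi Aj -\sum_{i=0}^{j-1}\multi{A\setminus B}{i}\multi {B}{j-i}$, which is your deconvolution identity with the summation reindexed. You additionally justify the recurrence via the factorization $f_A(x)=f_B(x)\cdot f_{A\setminus B}(x)$, which the paper leaves implicit.
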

 \begin{proof}
 We compute $\multi{A\setminus B}{j}$ for $j=0,1,\ldots,k$ using the following property:
 $$\multi{A\setminus B}{j}=\multi Aj -\sum_{i=0}^{j-1}\multi{A\setminus B}{i}\multi {B}{j-i}\qedhere$$
 \end{proof}
 \noindent
 Now we consider the shapes as in Lemma~\ref{le:calculations_easy}.
 To simplify the notation, by $E-e$ we denote $E\setminus\{e\}$ and write $\mult(u,v)$ instead of $\mult(\{u,v\})$.
 \begin{enumerate}
 \item $\cc\sA=\sum_{u\in V_1}\multi{E(u)}{4}$
 \item $\cc\sB=\sum_{(u,v)\in E}\mult(u,v)\multi{E(u)-(u,v)}{2}\multi{E(v)-(u,v)}{1}$
 \item $\cc\sC=\left(\sum_{u\in V_1}\multi{E(u)}{3}\multi{E\setminus E(u)}{1}\right)-\cc\sB$
 \item Let $\cc>=\sum_{v\in V_2}\multi{E(v)}{2}$. 
 Now, instead of $\zz$, we need to count a shape similar to~$\zz$, but instead of choosing a single middle edge $e$ we select an ordered pair of edges $(e_1,e_2)$ where possibly $e_1=e_2$:
 
 $\cc{\zoz}=\sum_{(u,v)\in E}\left(\mult(u,v)\right)^2\multi{E(u)-(u,v)}{1}\multi{E(v)-(u,v)}{1}$.\\
 Then:
 $\cc\sG=\frac 12 \left(\sum_{(u,v)\in E}\mult(u,v)\multi{E(u)-(u,v)}{1}\left(\cc>-\multi{E(v)}{2}\right) -\cc{\zoz}-2\cc\sB-\cc{\mirror\sB}\right)$
 \end{enumerate}
 Now we consider the shapes as in Lemma~\ref{le:calculations_hard}. Again the $\ff{\text{-}}$values are auxiliary.
 \begin{enumerate}
  \item Let $\tc\sE =   \sumxy{\sum_{v\in V_2}}{\multi{E(x)-(x,v)}{1}}{\multi{E(y)-(y,v)}{1}}{N(v)}$.
  
  Then:
  $\cc\sE=\tc\sE -2\cc\sD=\tc\sE-2\xx$.
  
  \item Let $\ff\sF=\sumxynobreak{}{\multi{E(x)}{2}}{\multi{E(y)}{2}}{V_1}$.
  
  Then:
  $\cc\sF=\ff\sF - \cc\sE -\cc\sD=\ff\sF - (\tc{\sE}-2\xx) - \xx = \tc{\sF}+\xx$.
  
  \item Let $\ff\sH=\sum_{(u,v)\in E}\mult(u,v)\multi{E(u)-(u,v)}{1}\multi{E(v)-(u,v)}{1}\multi{\EminusUV}{1}$.
  
  Then:  
  $\cc\sH=\ff\sH-2\cc\sE -2\cc{\mirror{\sE}} - 4\cc\sD\\
  = \ff\sH -2(\tc{\sE}-2\xx)-2(\tc{\mirror\sE}-2\xx)-4\xx =\tc{\sH}+4\xx$.
  
  \item Let $\ff\sI=\sum_{(u,v)\in E}\mult(u,v)\multi{E(u)-(u,v)}{1}\multi{\EminusUV)}{2}$.
  
  Then:  
  $\cc\sI=\frac 12 \left(\ff\sI-2 \cc\sG - \cc{\mirror\sB}-\cc\sH-2\cc\sE-4\cc\sF \right)\\
  =\frac 12 \left( \ff\sI- 2\tc{\sG} - \tc{\mirror\sB} -(\tc{\sH}+4\xx) - 2(\tc{\sE}-2\xx) -4(\tc{\sF}+\xx) \right)\\
  =\tc{\sI}+\frac 12 \left(-4\xx+ 4\xx-4\xx \right)
  =\tc{\sI}-2\xx$.
\end{enumerate}
\noindent
Now we proceed to the main shape $\sJ$ as in Lemma~\ref{le:m4}.
\begin{enumerate}
  \item Let $\cc\zz=\sum_{(u,v)\in E}\mult(u,v)\multi{E(u)-(u,v)}{1}\multi{E(v)-(u,v)}{1}$ \\
  and $\cc\leq = \frac 12 \left(\sum_{(u,v)\in E}\mult(u,v)\multi{E(u)-(u,v)}{1}\multi{\EminusUV}{1} - \cc\zz\right)$. \\  
  Then: $\cc\equiv = \frac 13 \left(\sum_{(u,v)\in E}\mult(u,v)\multi{\EminusUV}{2} - \cc\leq -\cc\geq \right)$.
  \item 
  Similarly as in $\zoz$ we consider shapes $\sOVZ,\sOV$ and $\sOII$ in which an ordered pair of (not necessarily distinct) edges $(e_1,e_2)$ connects one pair of nodes.
  
  Let $\cc\sOVZ= \sum_{(u,v)\in E}\mult(u,v)\multi{E(u)-(u,v)}{1}\left(\multi{E(v)-(u,v)}{1}^2 -\multi{E(v)-(u,v)}{2}\right)$\\
  and $\cc\sOV = \sum_{(u,v)\in E}\left(\mult(u,v)\right)^2(s_1-\multi{E(u)}{2}) - \cc\sOVZ$ where $s_1=\sum_{u\in V_1}\multi{E(u)}{2}$\\
  and $\cc\sOII = \sum_{(u,v)\in E}\left(\mult(u,v)\right)^2\multi{\EminusUV}{2} - \cc\sOV -\cc{\mirror{\sOV}}$.
  
  Then: 
  $\cc\sJ = \frac 14 \left(m\cc\equiv -\cc\sH-2\cc\sI-2\cc{\mirror\sI}-\cc\sOII  \right)\\
 =\frac 14 \left(m\cdot\tc\equiv -(\tc{\sH}+4\xx)-2(\tc{\sI}-2\xx)-2(\tc{\mirror\sI}-2\xx) -\tc\sOII \right)\\
 =\tc{\sJ} +\frac 14 ( -4\xx+4\xx+4\xx)=\tc{\sJ}+\xx$.\qedhere
 \end{enumerate}
 
\end{proof}

\end{document}